\pgfplotsset{compat = newest}
\newcommand{\orcidicon}[1]{\href{https://orcid.org/#1}{\includegraphics[height=\fontcharht\font`\B]{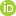}}}
\newtheorem{definition}{Definition}[section]
\newtheorem{theorem}{Theorem}[section]
\newtheorem{proposition}[theorem]{Proposition}
\newtheorem{corollary}[theorem]{Corollary}
\newcounter{mathseed}
\def\tornpaper#1{%
	\ifthenelse{\isodd{\value{mathseed}}}
	{%
		\tikz
		{
			\node[inner sep = 1em] (A) {#1};		
			\begin{pgfonlayer}{background}			
				\fill[paper]						
				\pgfextra{\pgfmathsetseed{\arabic{mathseed}}\addtocounter{mathseed}{1}}%
				{decorate[irregular cloudy border]{decorate{decorate{decorate{decorate[ragged border]{
										(A.north west) -- (A.north east)
				}}}}}}
				-- (A.south east)
				\pgfextra{\pgfmathsetseed{\arabic{mathseed}}}%
				{decorate[irregular spiky border]{decorate{decorate{decorate{decorate[ragged border]{
										-- (A.south west)
				}}}}}}
				-- (A.north west);
			\end{pgfonlayer}
		}
	}
	{%
		\tikz{
			\node[inner sep=1em] (A) {#1};  
			\begin{pgfonlayer}{background}  
				\fill[paper] 
				\pgfextra{\pgfmathsetseed{\arabic{mathseed}}\addtocounter{mathseed}{1}}%
				{decorate[irregular spiky border]{decorate{decorate{decorate{decorate[ragged border]{
										(A.north east) -- (A.north west)
				}}}}}}
				-- (A.south west)
				\pgfextra{\pgfmathsetseed{\arabic{mathseed}}}%
				{decorate[irregular cloudy border]{decorate{decorate{decorate{decorate[ragged border]{
										-- (A.south east)
				}}}}}}
				-- (A.north east);
		\end{pgfonlayer}}
	}
}
\definecolor{MyLightRed}{RGB}{244, 213, 245}
\definecolor{WordRed}{RGB}{255, 0, 102}
\definecolor{RedDarkLightest}{HTML}{ff0088}
\definecolor{RedDarkLight}{HTML}{ea005f}
\definecolor{RedPurple}{HTML}{aa007f}
\definecolor{Purple}{HTML}{911146}
\definecolor{PurpleDark}{RGB}{102, 0, 102}
\definecolor{WordLightGreen}{RGB}{140, 214, 192}
\definecolor{WordGreen}{RGB}{0, 176, 80}
\definecolor{GreenLightest}{HTML}{00ffa0}
\definecolor{GreenLighter1}{HTML}{00b383}
\definecolor{GreenLighter2}{HTML}{00aa7f}
\definecolor{GreenDark}{HTML}{225522}
\definecolor{GreenTeal}{HTML}{008080}
\definecolor{WordIceBlue}{RGB}{223, 227, 229}
\definecolor{MyVeryLightBlue}{RGB}{211, 245, 247}
\definecolor{WordBlueVeryLight}{RGB}{0, 176, 240}
\definecolor{WordBlueLight}{RGB}{0, 112, 192}
\definecolor{WordBlueDark}{RGB}{46, 116, 181}
\definecolor{WordBlueDarker}{RGB}{31, 78, 121}
\definecolor{WordBlueDarker25}{RGB}{54, 96, 146}
\definecolor{WordBlueDarker50}{RGB}{36, 64, 98}
\definecolor{WordBlueDarkest}{RGB}{0, 32, 96}
\definecolor{WordBlue}{RGB}{19, 65, 99}
\definecolor{MyBlue}{RGB}{0, 64, 128}
\definecolor{MyDarkBlue}{RGB}{0, 51, 102}
\definecolor{BlueVeryDark}{HTML}{222255}
\definecolor{MagentaVeryLight}{RGB}{178, 162, 201}
\definecolor{MagentaLighter}{RGB}{161, 106, 221}
\definecolor{MagentaLight}{RGB}{128, 100, 162}
\definecolor{MagentaDark}{RGB}{106, 65, 152}
\definecolor{MagentaVeryDark}{RGB}{97, 75, 128}
\definecolor{WordAquaLighter80}{RGB}{218, 238, 243}
\definecolor{WordAquaLighter60}{RGB}{183, 222, 232}
\definecolor{WordAquaLighter40}{RGB}{146, 205, 220}
\definecolor{WordAquaDarker25}{RGB}{49, 134, 155}
\definecolor{WordAquaDarker50}{RGB}{33, 89, 103}
\definecolor{WordVeryLightTeal}{RGB}{223, 236, 235}
\definecolor{WordLightTeal}{RGB}{160, 199, 197}
\definecolor{WordDarkTealLighter80}{RGB}{207, 223, 234}
\definecolor{WordDarkTeal}{RGB}{72, 123, 119}
\definecolor{WordDarkerTeal}{RGB}{48, 82, 80}
\definecolor{WordTurquoiseLighter80}{RGB}{209, 238, 249}
\definecolor{Brown}{HTML}{666633}
\title
	{
		A Quantum Approach to News Verification from the Perspective of a News Aggregator
	}
\author
{
	Theodore Andronikos$^1$\orcidicon{0000-0002-3741-1271}
	and
	Alla Sirokofskich$^2$\\
	\\
	$^1$ \ Department of Informatics, Ionian University, \\
	7 Tsirigoti Square, 49100 Corfu, Greece; \\
	andronikos@ionio.gr \\
	$^2$ \ Department of History and Philosophy of Sciences, \\
	National and Kapodistrian University of Athens, \\
	Athens 15771, Greece; \\
	asirokof@math.uoa.gr
}
\begin{document}

\maketitle

\begin{abstract}
	In the dynamic landscape of digital information, the rise of misinformation and fake news presents a pressing challenge. This paper takes a completely new approach to verifying news, inspired by how quantum actors can reach agreement even when they are spatially spread out. We propose a radically new, to the best of our knowledge, algorithm that uses quantum ``entanglement'' (think of it as a special connection) to help news aggregators sniff out bad actors, whether they be other news sources or even fact-checkers trying to spread misinformation. This algorithm doesn't rely on quantum signatures, it just uses basic quantum technology we already have, in particular, special pairs of particles called ``EPR pairs'' that are much easier to create than other options. More complex entangled states are like juggling too many balls – they're hard to make and slow things down, especially when many players are involved. For instance, bigger, more complex states like ``GHZ states'' work for small groups, but they become messy with larger numbers. So, we stick with Bell states, the simplest form of entanglement, which are easy to generate no matter how many players are in the game. This means our algorithm is faster to set up, works for any number of participants, and is more practical for real-world use. Bonus points: it finishes in a fixed number of steps, regardless of how many players are involved, making it even more scalable. This new approach may lead to a powerful and efficient way to fight misinformation in the digital age, using the weird and wonderful world of quantum mechanics.
	\\
\textbf{Keywords:}: Fake news, quantum algorithms, quantum entanglement, Bell states, quantum games.
\end{abstract}
\section{Introduction} \label{sec: Introduction}

In the contemporary digital era, the proliferation of fake news, defined as deliberately false information masquerading as legitimate news, has emerged as a pervasive challenge across online and social media platforms. The rapid dissemination of misinformation poses serious repercussions, eroding trust in institutions, inciting violence, and undermining democratic processes. The urgent need for robust fake news detection mechanisms is more pronounced than ever. Fake news flourishes in the online realm due to several contributing factors. The accessibility of content creation and sharing, coupled with the absence of stringent oversight and anonymity, empowers malicious actors to disseminate false narratives with impunity. Furthermore, social media algorithms, often designed to prioritize sensational and engaging content, inadvertently amplify the reach of fake news.

The prevalence of fake news underscores the necessity for the development and deployment of effective detection techniques. One strategy involves leveraging fact-checking organizations to manually verify the veracity of information. However, this approach is labor-intensive and unable to cope with the sheer volume of content produced daily. Alternatively, employing artificial intelligence (AI) and machine learning (ML) methodologies offers promise in automatically identifying fake news. These algorithms can scrutinize various factors, including language usage, writing style, and source reliability, to flag potentially misleading content. Despite the potential of AI-driven detection methods, they encounter challenges. Fake news purveyors continuously adapt their strategies to evade detection, and AI models may exhibit biases or inaccuracies if trained on inadequate or skewed datasets. A paradox emerges as AI, while possessing the capability to identify and mitigate false news, simultaneously facilitates the proliferation of such deceptive online content. Notwithstanding these obstacles, the pursuit of effective fake news detection remains imperative. By combating the dissemination of misinformation, we safeguard individuals and society against its deleterious effects, nurturing a more informed, civil, and democratic online discourse.

Recent research (\cite{Campan2017}) underscores the necessity for social media platforms to integrate diverse content verification techniques alongside existing algorithms and AI approaches to combat false news effectively. Additionally, taxonomy frameworks like the one proposed in \cite{Vorhies2017, Shu2017}, which categorizes false news into distinct types, can aid social media platforms in alerting users to potentially misleading content, contingent upon agreed-upon standards for content analysis. The endeavor to automate the detection and prevention of false news presents formidable challenges, particularly concerning the assessment of content legitimacy (\cite{Rashkin2017, Mustafaraj2017}). Contemporary efforts predominantly rely on machine learning techniques to identify and mitigate fake news articles, as evidenced by numerous recent scholarly works (\cite{Gilda2017, Farajtabar2017, Agudelo2018, Kesarwani2020, Kesarwani2020a, Vijayaraghavan2020, Nagashri2021, Pandey2022}). The fusion of AI with blockchain technology emerges as a promising avenue for combating fake news (\cite{Tee2018}). This approach offers a decentralized and trustworthy platform for validating consent, authenticity, integrity, and perspectives on truth, thereby mitigating the spread of false narratives.

The quest for quantum computers that dethrone their classical counterparts continues. While we haven't reached the promised land yet, recent landmarks like IBM's 127-qubit Eagle \cite{IBMEagle}, 433-qubit Osprey \cite{IBMOsprey}, and the colossal 1,121-qubit Condor \cite{IBMCondor} show the path forward is accelerating. Perhaps we are closer than we think to the quantum revolution. Given this broader context, it becomes evident that quantum technology has reached a level of maturity where it merits serious consideration for inclusion in a comprehensive framework aimed at combating misinformation, especially considering the potential of quantum computers to enhance the speed and efficiency of Machine Learning algorithms. Researchers are actively pursuing the development of algorithmic methods that could effectively detect fake news and deepfakes by integrating Quantum Machine Learning techniques \cite{Kamal2022}. Quantum Machine Learning seeks to merge the principles of quantum computing with those of Machine Learning, offering tangible advantages such as improved Deep Fake detection \cite{QuantumIntegrity2019}. Tian et al. \cite{Tian2021} proposed a fake news detection system utilizing quantum K-Nearest Neighbors. Furthermore, Google has introduced an open-source library for Quantum Machine Learning, suggesting the potential for quantum computing to address fake and deepfake challenges in the near term \cite{Google2020}.

However, this paper explores a distinct quantum approach. It does not rely on Quantum Machine Learning, but, instead, draws inspiration from successful quantum protocols that achieve distributed consensus and detectable Byzantine Agreement in distributed environments (refer to recent work by Andronikos et al. \cite{Andronikos2023a} and related literature). We acknowledge the prevalent practice in contemporary social media platforms, wherein independent third-party fact-checking organizations, certified by impartial authorities, are employed to identify, assess, and take action on content. This fact-checking methodology primarily targets viral misinformation, particularly blatant falsehoods lacking factual basis. Ideally, fact-checking entities prioritize verifying demonstrably false claims that are both timely and impactful. Naturally, fact-checkers themselves should be subject to scrutiny and ongoing evaluation. The algorithm proposed herein envisions a decentralized setting where numerous news aggregators are overseen by multiple news verifiers responsible for content authentication. Described as a quantum game, our algorithm involves familiar figures such as Alice and Bob, alongside their numerous counterparts. Employing games in our presentation aims to facilitate comprehension of technical concepts. Quantum games, since their inception in 1999 \cite{Meyer1999, Eisert1999}, have garnered significant attention. The main reason for his trend is the potential superiority of quantum strategies over classical ones \cite{Andronikos2018, Andronikos2021, Andronikos2022a, Kastampolidou2023a}. Notably, the renowned prisoners' dilemma game exemplifies this phenomenon, extending to other abstract quantum games as well \cite{Eisert1999, Giannakis2019}. Moreover, the quantization of various classical systems can be applied to political structures, as demonstrated recently in \cite{Andronikos2022}. In the realm of quantum cryptography, the presentation of protocols often takes the form of games, a common practice evident in recent works such as \cite{Ampatzis2021, Ampatzis2022, Ampatzis2023, Andronikos2023, Andronikos2023a, Andronikos2023b}. Quantum strategies have demonstrated superiority over classical ones in various scenarios \cite{Andronikos2018, Andronikos2021, Andronikos2022a, Kastampolidou2023a}. The prisoners' dilemma game serves as a prominent example, and its applicability extends to other abstract quantum games \cite{Eisert1999, Giannakis2019}. Notably, the quantization of classical systems finds applications even in political structures \cite{Andronikos2022}. In the broader context of game-theoretic applications, unconventional environments, such as biological systems, have garnered significant attention \cite{Theocharopoulou2019, Kastampolidou2020a, Kostadimas2021}. It's intriguing to note that biological systems may give rise to biostrategies that outperform classical ones, even in iconic games like the Prisoners' Dilemma \cite{Kastampolidou2020, Kastampolidou2021, Kastampolidou2023, Papalitsas2021, Adam2023}.

\textbf{Contribution}. This paper introduces a novel perspective on the pressing issue of news verification by drawing inspiration from quantum protocols achieving distributed consensus, diverging from the more conventional Quantum Machine Learning approach. We present the first entanglement-based algorithm, to the best of our knowledge, designed for news aggregators to identify potential malicious actors. These actors could include other news aggregators or, even more concerning, fact-checkers intentionally disseminating misinformation.

The key advantage of our algorithm, which does not rely on a quantum signature scheme, lies in its compatibility with current quantum technology, as it solely depends on EPR pairs. While more complex multi-particle entangled states are possible, they are challenging to produce with existing quantum systems, leading to extended preparation times and complexity, particularly in scenarios involving numerous participants. For example, while contemporary quantum computers can easily generate $\ket{GHZ_n}$ states for small values of $n$, the preparation and distribution of these states become increasingly difficult as $n$ grows. Therefore, we exclusively employ Bell states, the simplest among maximally entangled states, in our algorithm.

Utilizing only EPR pairs, specifically $\ket{\Phi^+}$ pairs, regardless of the number of news aggregators and verifiers, results in reduced preparation time, improved scalability, and enhanced practicality. Additionally, our algorithm completes in a constant number of steps, irrespective of the number of participants, further enhancing its scalability and efficiency.

\subsection*{Organization} \label{subsec: Organization}

This article is organized as follows. The Introduction (Section \ref{sec: Introduction}) presents the subject matter, accompanied by bibliographic references to related works. A concise overview of the essential concepts is provided in Section \ref{sec: Background & Notation}, laying the foundation for understanding our protocol. A detailed explanation of the hypotheses underlying the QNVA is given in Section \ref{sec: Hypotheses & Setting}. The QNVA is formally presented in Section \ref{sec: The Quantum News Verification Algorithm}, explaining its inner workings in detail.
The paper concludes with a summary and discussion of the finer points of the algorithm in Section \ref{sec: Discussion and Conclusions}.

\section{Background \& notation} \label{sec: Background & Notation}

\subsection{EPR pairs} \label{subsec: EPR Pairs}

Quantum entanglement is a phenomenon where two or more particles become linked in such a way that they share the same fate, despite being separated by vast distances. This connection is so powerful that measuring the properties of one particle instantly determines the corresponding properties of its entangled partner, regardless of the separation between them. This instantaneous correlation defies our classical understanding of the universe and has profound implications for quantum mechanics and its potential applications. Mathematically, a single product state is not sufficient to describe entangled states of composite systems. So, they must be described as a linear combination of two or more product states of their subsystems. The famous Bell states are special quantum states of two qubits, also called EPR pairs, that represent the simplest form of maximal entanglement. These states can be compactly described by the next formula from \cite{Nielsen2010}.

\begin{align} \label{eq: Bell States General Equation}
	\ket{ \beta_{ x, y } } = \frac { \ket{ 0 } \ket{ y } + (-1)^x \ket{ 1 } \ket{ \overline{ y } } } { \sqrt{ 2 } } \ ,
\end{align}

where $\ket{ \overline{ y } }$ is the negation of $\ket{ y }$.

There are four Bell states and their specific mathematical expression is given below. The subscripts $A$ and $B$ are used to emphasize the subsystem to which the corresponding qubit belongs, that is, qubits $\ket{ \cdot }_{ A }$ belong to Alice and qubits $\ket{ \cdot }_{ B }$ belong to Bob.

\begin{tcolorbox}
	[
		grow to left by = 1.00 cm,
		grow to right by = 0.00 cm,
		colback = white,			
		enhanced jigsaw,			
		sharp corners,
		toprule = 0.1 pt,
		bottomrule = 0.1 pt,
		leftrule = 0.1 pt,
		rightrule = 0.1 pt,
		sharp corners,
		center title,
		fonttitle = \bfseries
	]
	\begin{minipage}[b]{0.475 \textwidth}
		\begin{align} \label{eq: Bell State Phi +}
			\ket{ \Phi^{ + } } = \ket{ \beta_{ 00 } } = \frac { \ket{ 0 }_{ A } \ket{ 0 }_{ B } + \ket{ 1 }_{ A } \ket{ 1 }_{ B } } { \sqrt{ 2 } }
		\end{align}
	\end{minipage} 
	\hfill
	\begin{minipage}[b]{0.45 \textwidth}
		\begin{align} \label{eq: Bell State Phi -}
			\ket{ \Phi^{ - } } = \ket{ \beta_{ 10 } } = \frac { \ket{ 0 }_{ A } \ket{ 0 }_{ B } - \ket{ 1 }_{ A } \ket{ 1 }_{ B } } { \sqrt{ 2 } }
		\end{align}
	\end{minipage}
	\begin{minipage}[b]{0.475 \textwidth}
		\begin{align} \label{eq: Bell State Psi +}
			\ket{ \Psi^{ + } } = \ket{ \beta_{ 01 } } = \frac { \ket{ 0 }_{ A } \ket{ 1 }_{ B } + \ket{ 1 }_{ A } \ket{ 0 }_{ B } } { \sqrt{ 2 } }
		\end{align}
	\end{minipage} 
	\hfill
	\begin{minipage}[b]{0.45 \textwidth}
		\begin{align} \label{eq: Bell State Psi -}
			\ket{ \Psi^{ - } } = \ket{ \beta_{ 11 } } = \frac { \ket{ 0 }_{ A } \ket{ 1 }_{ B } - \ket{ 1 }_{ A } \ket{ 0 }_{ B } } { \sqrt{ 2 } }
		\end{align}
	\end{minipage}
\end{tcolorbox}

For existing quantum computers based on the circuit model, it is quite trivial to produce Bell states. The proposed algorithm relies on $\ket{ \Phi^{ + } } = \frac { \ket{ 0 }_{ A } \ket{ 0 }_{ B } + \ket{ 1 }_{ A } \ket{ 1 }_{ B } } { \sqrt{ 2 } }$ pairs.

\subsection{The state $\ket{ + }$} \label{subsec: The States ket +}

Apart from $\ket{ \Phi^{ + } }$ pairs, our scheme makes use of another signature state, namely $\ket{ + }$. For completeness, we recall the definition of $\ket{ + }$ below

\begin{tcolorbox}
	[
		grow to left by = 0.00 cm,
		grow to right by = 0.00 cm,
		colback = white,			
		enhanced jigsaw,			
		sharp corners,
		toprule = 0.1 pt,
		bottomrule = 0.1 pt,
		leftrule = 0.1 pt,
		rightrule = 0.1 pt,
		sharp corners,
		center title,
		fonttitle = \bfseries
	]
		\begin{align} \label{eq: Ket +}
			\ket{ + } = H \ket{ 0 } = \frac { \ket{ 0 } + \ket{ 1 } } { \sqrt{ 2 } }
			\ .
		\end{align}
\end{tcolorbox}

State $\ket{ + }$ can be readily produced by applying the Hadamard transform on $\ket{ 0 }$. In the rest of this paper, the set of bit values $\{ 0, 1 \}$ is denoted by $\mathbb { B }$. As a final note, let us clarify that measurements are always performed with respect to the computational basis $\{ \ket{ 0 }, \ket{ 1 } \}$. 

\section{Hypotheses \& setting} \label{sec: Hypotheses & Setting}

Before we proceed with the comprehensive presentation of the proposed algorithm, it is useful to explicitly state the envisioned setting and the hypotheses that underlie the execution of our algorithm.

\subsection{The protagonists} \label{subsec: The Protagonists}

As we have previously mentioned, we follow what can, almost, be considered a tradition and describe the proposed algorithm as a game. The players in this game are the spatially distributed news verifiers and news aggregators, collectively called Alice and Bob. First, we state the actors that appear in our settings and clarify the roles they are supposed to play.

\begin{enumerate} [ left = 0.50 cm, labelsep = 0.75 cm, start = 1 ]
	\renewcommand\labelenumi{(\textbf{H}$_{ \theenumi }$)}
	\item	\emph{A trusted quantum source}. There exists a trusted quantum source that generates single qubits in the $\ket{ + }$ state and EPR pairs in the $\ket{ \Phi^{ + } }$ state. The source also distributes these qubits to all other players through appropriate quantum channels, according to the entanglement distribution scheme outlined in the forthcoming Definition \ref{def: Entanglement Distribution Scheme}.
	\item	\emph{News verifiers}. There exist $m$ special players that are called news verifiers. Their mission is to fact-check every piece of news and classify it as \emph{true} of \emph{fake}. In our game this role is undertaken by Alice and her clones, who are denoted by Alice$_{ 1 }$, \dots, Alice$_{ m }$. The news verifiers work independently of each other, and no communication, classical or quantum, takes place between any two of them.
	\item	\emph{News aggregators}. There are also $n$ players that are called news aggregators, and whose purpose is to gather and disseminate news that have been certified as true. This role is assumed by Bob and his clones that are denoted by Bob$_{ 1 }$, \dots, Bob$_{ n }$, where, typically, $n > m$.
\end{enumerate}

\subsection{The connections among the players} \label{subsec: The Connections Among the Players}

Besides the players listed above, there is a network of quantum and classical channels that enables the exchange of information among the players. In particular, we assume the existence of the following communication channels.

\begin{enumerate} [ left = 0.50 cm, labelsep = 0.75 cm, start = 4 ]
	\renewcommand\labelenumi{(\textbf{H}$_{ \theenumi }$)}
	\item	It is more realistic to consider that each Alice clone is not responsible for all news aggregators, but only for a specific group of news aggregators that are under her supervision. Henceforth, we shall assume that each Alice$_{ i }$, $1 \leq i \leq m$, is connected via pairwise authenticated classical channels to a specific subset of news aggregators who constitute her \emph{active network}, and Alice$_{ i }$ is their \emph{coordinator}. These aggregators are Alice$_{ i }$'s \emph{receivers}, their cardinality is denoted by $n_{ i }$ and they are designated by Bob$_{ 1 }^{ i }$, Bob$_{ 2 }^{ i }$, \dots, Bob$_{ n_{ i } }^{ i }$.
	\item	Each Alice$_{ i }$, $1 \leq i \leq m$, sends two things to every Bob$_{ k }^{ i }$, $1 \leq k \leq n_{ i }$, in her active network:
	\begin{itemize} [ left = 0.10 cm ]
		\item[$\Diamond$]	The result of the verification check, denoted by $c_{ k }^{ i }$.
		\item[$\Diamond$]	A \emph{proof sequence}, denoted by $\mathbf { p }_{ k }^{ i }$, which is intended to convince Bob$_{ k }^{ i }$ that she is honest.
	\end{itemize}
	The situation is visually depicted in Figure \ref{fig: Alice Sends Decision and Proof to her Bobs}.
	\begin{tcolorbox}
		[
			grow to left by = 0.00 cm,
			grow to right by = 0.00 cm,
			colback = MagentaLight!03,			
			enhanced jigsaw,					
			sharp corners,
			toprule = 1.0 pt,
			bottomrule = 1.0 pt,
			leftrule = 0.1 pt,
			rightrule = 0.1 pt,
			sharp corners,
			center title,
			fonttitle = \bfseries
		]
		\begin{figure}[H]
			\centering
			\begin{tikzpicture} [scale = 1.00]
				\def \n {8}
				\def \Angle {360 / \n}
				\def \Radius {4.75}
				\node
					[
						shade, top color = WordBlueDarker25, bottom color = black, rectangle, text width = 10.00 cm, align = center
					] ( Label ) at ( 0.0, 7.00 )
					{
						\color{white} Alice$_{ i }$ sends the verification outcome $c_{ k }^{ i }$ and the proof sequence $\mathbf { p }_{ k }^{ i }$ to every Bob$_{ k }^{ i }$ in her active network.
					};
				\draw [ line width = 2.00 pt, MyBlue ] ( 0, 0 ) circle [ radius = \Radius cm ];
				\node
					[
						dave,
						scale = 1.50,
						anchor = center,
						label = { [ label distance = 0.00 cm ] east: Bob$_{ 1 }^{ i }$ }
					]
					( ) at ( { \Radius * cos(1 * \Angle) }, { \Radius * sin(1 * \Angle) } ) { };
				\node
					[
						charlie,
						scale = 1.50,
						anchor = center,
						label = { [ label distance = 0.00 cm ] west: Bob$_{ 2 }^{ i }$ }
					]
					( ) at ( { \Radius * cos(3 * \Angle) }, { \Radius * sin(3 * \Angle) } ) { };
				\node [ shade, shading = ball, ball color = WordAquaLighter80, circle ] () at ( { \Radius * cos(4 * \Angle) }, { \Radius * sin(4 * \Angle) } ) {};
				\node [ shade, shading = ball, ball color = WordAquaLighter80, circle ] () at ( { \Radius * cos(5 * \Angle) }, { \Radius * sin(5 * \Angle) } ) {};
				\node [ shade, shading = ball, ball color = WordAquaLighter80, circle ] () at ( { \Radius * cos(6 * \Angle) }, { \Radius * sin(6 * \Angle) } ) {};
				\node
					[
						bob,
						scale = 1.50,
						anchor = center,
						label = { [ label distance = 0.00 cm ] east: Bob$_{ n_{ i } }^{ i }$ }
					]
					( ) at ( { \Radius * cos(7 * \Angle) }, { \Radius * sin(7 * \Angle) } ) { };
				\node
					[
						alice,
						scale = 1.50,
						anchor = center,
						label = { [ label distance = 0.00 cm ] south: Alice$_{ i }$ }
					]
					(Alice) at ( 0.00, 0.00 ) { };
				\begin{scope}[on background layer]
					\foreach \angle / \index in { 45/1, 135/2, 315/n_{ i } }
					\draw [ MyDarkBlue, ->, -{ Latex [ width = 14mm, length = 9mm ] }, line width = 6.25 mm, ] 
					( { 1.00 * cos(\angle) }, { 1.00 * sin(\angle) } ) --
					( { 4.25 * cos(\angle) }, { 4.25 * sin(\angle) } )
					node [ midway, text = white, rotate = \angle ] {$c_{ \index }^{ i }, \ \mathbf { p }_{ \index }^{ i }$};
					\foreach \angle in { 180, 225, 270 }
					\node [ shade, shading = ball, ball color = WordAquaLighter80, circle ] () at ( { \Radius * 0.65 * cos(\angle) }, { \Radius * 0.65 * sin(\angle) } ) {};
				\end{scope}
				\node [ anchor = center, below = 5.00 cm of Alice ] (PhantomNode) { };
			\end{tikzpicture}
			\caption{The above figure illustrates the fact that Alice$_{ i }$, $1 \leq i \leq m$, sends the verification outcome $c_{ k }^{ i }$ and the proof sequence $\mathbf { p }_{ k }^{ i }$ to every Bob$_{ k }^{ i }$, $1 \leq k \leq n_{ i }$, in her active network.} \label{fig: Alice Sends Decision and Proof to her Bobs}
		\end{figure}
	\end{tcolorbox}
	\item	All news aggregators that belong to the same active network are connected via pairwise authenticated classical channels. This enables them to exchange, whenever they deem necessary, the verification outcomes and the proof sequences they received from their coordinator. This action can be considered as an extra layer of verification and an indirect way in which aggregators can assess the honesty of other aggregators and also of the coordinator. This topology is shown in Figure \ref{fig: Bob Sends Decision and Proof to Other Bobs}. We clarify that aggregators that have no coordinator in common, do not communicate in any way.
	\begin{tcolorbox}
		[
			grow to left by = 0.00 cm,
			grow to right by = 0.00 cm,
			colback = MagentaLight!03,			
			enhanced jigsaw,					
			sharp corners,
			toprule = 1.0 pt,
			bottomrule = 1.0 pt,
			leftrule = 0.1 pt,
			rightrule = 0.1 pt,
			sharp corners,
			center title,
			fonttitle = \bfseries
		]
		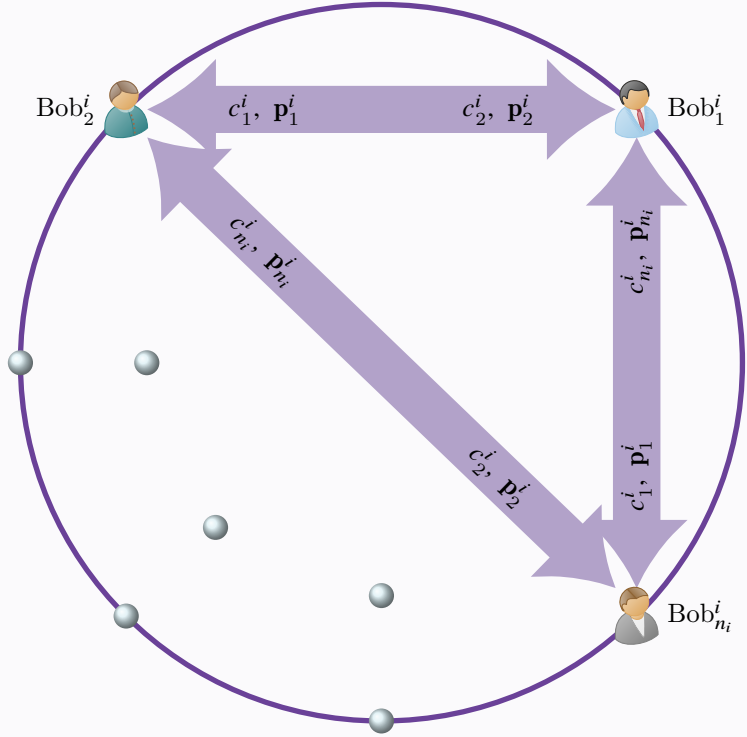
\begin{figure}[H]
			\centering
			\begin{tikzpicture} [scale = 1.00]
				\def \n {8}
				\def \Angle {360 / \n}
				\def \Radius {4.75}
				\node
					[
						shade, top color = WordAquaLighter80, bottom color = WordAquaDarker25, rectangle, text width = 11.00 cm, align = center
					] ( Label ) at ( 0.0, 7.00 )
					{
						\color{black} Aggregators Bob$_{ 1 }^{ i }$, Bob$_{ 2 }^{ i }$, \dots, Bob$_{ n_{ i } }^{ i }$ that have the same coordinator Alice$_{ i }$ exchange the verification outcomes and the proof sequences they received from Alice$_{ i }$ through pairwise authenticated classical channels.
					};
				\draw [ line width = 2.00 pt, MagentaDark ] ( 0, 0 ) circle [ radius = \Radius cm ];
				\node
					[
						dave,
						scale = 1.50,
						anchor = center,
						label = { [ label distance = 0.00 cm ] east: Bob$_{ 1 }^{ i }$ }
					]
					(Dave) at ( { \Radius * cos(1 * \Angle) }, { \Radius * sin(1 * \Angle) } ) { };
				\node
					[
						charlie,
						scale = 1.50,
						anchor = center,
						label = { [ label distance = 0.00 cm ] west: Bob$_{ 2 }^{ i }$ }
					]
					(Charlie) at ( { \Radius * cos(3 * \Angle) }, { \Radius * sin(3 * \Angle) } ) { };
				\node [ shade, shading = ball, ball color = WordAquaLighter80, circle ] () at ( { \Radius * cos(4 * \Angle) }, { \Radius * sin(4 * \Angle) } ) {};
				\node [ shade, shading = ball, ball color = WordAquaLighter80, circle ] () at ( { \Radius * cos(5 * \Angle) }, { \Radius * sin(5 * \Angle) } ) {};
				\node [ shade, shading = ball, ball color = WordAquaLighter80, circle ] () at ( { \Radius * cos(6 * \Angle) }, { \Radius * sin(6 * \Angle) } ) {};
					\node
					[
						bob,
						scale = 1.50,
						anchor = center,
						label = { [ label distance = 0.00 cm ] east: Bob$_{ n_{ i } }^{ i }$ }
					]
					(Bob) at ( { \Radius * cos(7 * \Angle) }, { \Radius * sin(7 * \Angle) } ) { };
				\begin{scope}[on background layer]
					\draw [ MagentaVeryLight, <->, { Latex [ width = 14mm, length = 9mm ] }-{ Latex [ width = 14mm, length = 9mm ] }, line width = 6.25 mm, ] 
					(Charlie.east) -- (Dave.west)
					node [ black, near start, sloped ] { $c_{ 1 }^{ i }, \ \mathbf { p }_{ 1 }^{ i }$ }
					node [ black, near end, sloped ] { $c_{ 2 }^{ i }, \ \mathbf { p }_{ 2 }^{ i }$ };
					\draw [ MagentaVeryLight, <->, { Latex [ width = 14mm, length = 9mm ] }-{ Latex [ width = 14mm, length = 9mm ] }, line width = 6.25 mm, ]
					(Bob.north west) -- (Charlie.south east)
					node [ black, near start, sloped ] { $c_{ 2 }^{ i }, \ \mathbf { p }_{ 2 }^{ i }$ }
					node [ black, near end, sloped ] { $c_{ n_{ i } }^{ i }, \ \mathbf { p }_{ n_{ i } }^{ i }$ };
					\draw [ MagentaVeryLight, <->, { Latex [ width = 14mm, length = 9mm ] }-{ Latex [ width = 14mm, length = 9mm ] }, line width = 6.25 mm, ]
					(Bob.north) -- (Dave.south)
					node [ black, near start, sloped ] { $c_{ 1 }^{ i }, \ \mathbf { p }_{ 1 }^{ i }$ }
					node [ black, near end, sloped ] { $c_{ n_{ i } }^{ i }, \ \mathbf { p }_{ n_{ i } }^{ i }$ };
					\foreach \angle in { 180, 225, 270 }
					\node [ shade, shading = ball, ball color = WordAquaLighter80, circle ] () at ( { \Radius * 0.65 * cos(\angle) }, { \Radius * 0.65 * sin(\angle) } ) {};
				\end{scope}
				\node [ anchor = center, below = 5.00 cm of Alice ] (PhantomNode) { };
			\end{tikzpicture}
			\caption{The above figure illustrates the fact that all news aggregators with the same coordinator Alice$_{ i }$, i.e., Bob$_{ 1 }^{ i }$, Bob$_{ 2 }^{ i }$, \dots, Bob$_{ n_{ i } }^{ i }$, can exchange the verification outcomes and the proof sequences they received from Alice$_{ i }$ through pairwise authenticated classical channels.} \label{fig: Bob Sends Decision and Proof to Other Bobs}
		\end{figure}
	\end{tcolorbox}
	\item	Every news aggregator is responsible for maintaining the reputation system outlined below, independently, and in parallel with every other news aggregator.
	\begin{itemize} [ left = 0.10 cm ]
		\item[$\Diamond$]	A news ranking system that characterizes news as either true or fake.
		\item[$\Diamond$]	A reputation catalog that reflects the personal assessment of the aggregator regarding every other player (both verifier and aggregator) involved in information exchange.
	\end{itemize}
	News deemed as fake must be appropriately flagged as such, so that the public is made aware of this. The reputation catalog takes the form of two lists containing the unreliable verifiers and aggregators.
\end{enumerate}

The intuition behind the latter hypothesis
quite straightforward. It is expedient to record and consider unreliable those players that have exhibited contradictory and/or malicious behavior, and distinguish them from those players that have demonstrated consistent adherence to the rules and have a history of accurate and truthful reporting. By maintaining these records, each aggregator plays an important role in ensuring the integrity and efficiency of the news network. By identifying and isolating unreliable entities, he helps to protect the network from malicious actors and maintain the trust among participants.

\section{The quantum news verification algorithm} \label{sec: The Quantum News Verification Algorithm}

In this section we present the quantum news verification algorithm, or QNVA for short. Every Alice is tasked with verifying important news, and sending the result of her verification check to all her agents.

\begin{itemize}
	\item[$\Diamond$]	If the news in question passed the verification check, then Alice sends via the classical channel the bit $1$ to every Bob in her active network to signify its credibility. Additionally, she sends a \emph{personalized proof}, which is a sequence of bits, to each of her agents. The important thing here is that for each Bob the proof is different because it is constructed specifically for him.
	\item[$\Diamond$]	Symmetrically, if the news failed to pass the check, Alice sends via the classical channel the bit $0$ to every agent in her active network to indicate that it is fake, together with a personalized proof.
\end{itemize}

The QNVA is presented from the point of view of the individual Bob, where, of course, we assume that all Bobs implement the same algorithm consistently. The algorithm itself can be conceptually organized into $3$ distinct phases.

\subsection{The entanglement distribution phase} \label{subsec: The Entanglement Distribution Phase}

The first is the \textbf{entanglement distribution phase}, which refers to the generation and distribution of entangled $\ket{ \Phi^{ + } }$ pairs and qubits in the $\ket{ + }$ state. As we have explained in hypothesis ($\mathbf { H }_{ 1 }$), we assume the existence of a trusted quantum source that undertakes this task. In view of the capabilities of modern quantum apparatus, the trusted quantum source should have no difficulty in accomplishing this task. In terms of notation, we use the small Latin letters $q$ and $r$, with appropriate subscripts and superscripts, to designate the first and the second qubit, respectively, of the same $\ket{ \Phi^{ + } }$ pair.

\begin{definition} [Entanglement Distribution Scheme] \label{def: Entanglement Distribution Scheme}
	The trusted source creates two types of sequences, one that is intended for verifiers and one that is intended for aggregators. Both are distributed through quantum channels to their intended recipients. Specifically, for each piece of news that must be checked, and for each Alice$_{ i }$, $1 \leq i \leq m$, the source creates
	\begin{itemize}
		\item[$\Diamond$]	one \emph{verification} sequence $\mathbf { q }^{ i }$ that is sent to Alice$_{ i }$ and has the form
		\begin{align}
			\mathbf { q }^{ i }
			=
			\underbrace { \colorbox {WordAquaLighter40} { $q_{ n_{ i }, d }^{ i } \dots q_{ k, d }^{ i } \dots q_{ 1, d }^{ i }$ } }_{ \text{ tuple } d }
			\cdots
			\underbrace { \colorbox {WordAquaLighter60} { $q_{ n_{ i }, 2 }^{ i } \dots q_{ k, 2 }^{ i } \dots q_{ 1, 2 }^{ i }$ } }_{ \text{ tuple } 2 }
			\underbrace { \colorbox {WordAquaLighter80} { $q_{ n_{ i }, 1 }^{ i } \dots q_{ k, 1 }^{ i } \dots q_{ 1, 1 }^{ i }$ } }_{ \text{ tuple } 1 }
			\ , \text{ and }
			\label{eq: Alice's Verification Sequence}
		\end{align}
		\item[$\Diamond$]	$n_{ i }$ \emph{verification} sequences $\mathbf { r }_{ 1 }^{ i }$, $\mathbf { r }_{ 2 }^{ i }$, \dots, $\mathbf { r }_{ n_{ i } }^{ i }$ sent to Bob$_{ 1 }^{ i }$, Bob$_{ 2 }^{ i }$, \dots, Bob$_{ n_{ i } }^{ i }$, respectively, that have the form
		\begin{align}
			\mathbf { r }_{ k }^{ i }
			=
			\underbrace { \colorbox {WordLightGreen} { $\ket{ + } \dots r_{ k, d }^{ i } \dots \ket{ + }$ } }_{ \text{ tuple } d }
			\cdots
			\underbrace { \colorbox {WordLightGreen!66} { $\ket{ + } \dots r_{ k, 2 }^{ i } \dots \ket{ + }$ } }_{ \text{ tuple } 2 } \
			\underbrace { \colorbox {WordLightGreen!33} { $\ket{ + } \dots r_{ k, 1 }^{ i } \dots \ket{ + }$ } }_{ \text{ tuple } 1 }
			\ , \ 1 \leq k \leq n_{ i }
			\ .
			\label{eq: Bob's Verification Sequence}
		\end{align}
	\end{itemize}
	In the $\mathbf { q }^{ i }$ and $\mathbf { r }_{ 1 }^{ i }$, $\mathbf { r }_{ 2 }^{ i }$, \dots, $\mathbf { r }_{ n_{ i } }^{ i }$ sequences, the subscript $d$ is a positive integer called the \emph{degree of accuracy} of the verification. Furthermore, according to our convention, $q_{ k, l }^{ i }$ and $r_{ k, l }^{ i }$ denote the first and second qubits of the same $\ket{ \Phi^{ + } }$ pair that is used in the formation of the $l^{ th }$ tuple. Obviously, $\ket{ + }$ designates qubits that are in the $\ket{ + }$ state.
\end{definition}

The situation regarding the sequences of qubits distributed to Alice$_{ i }$ and the Bobs in her active network is visualized in Figure \ref{fig: Alice's and her Bobs' Quantum Sequences}.

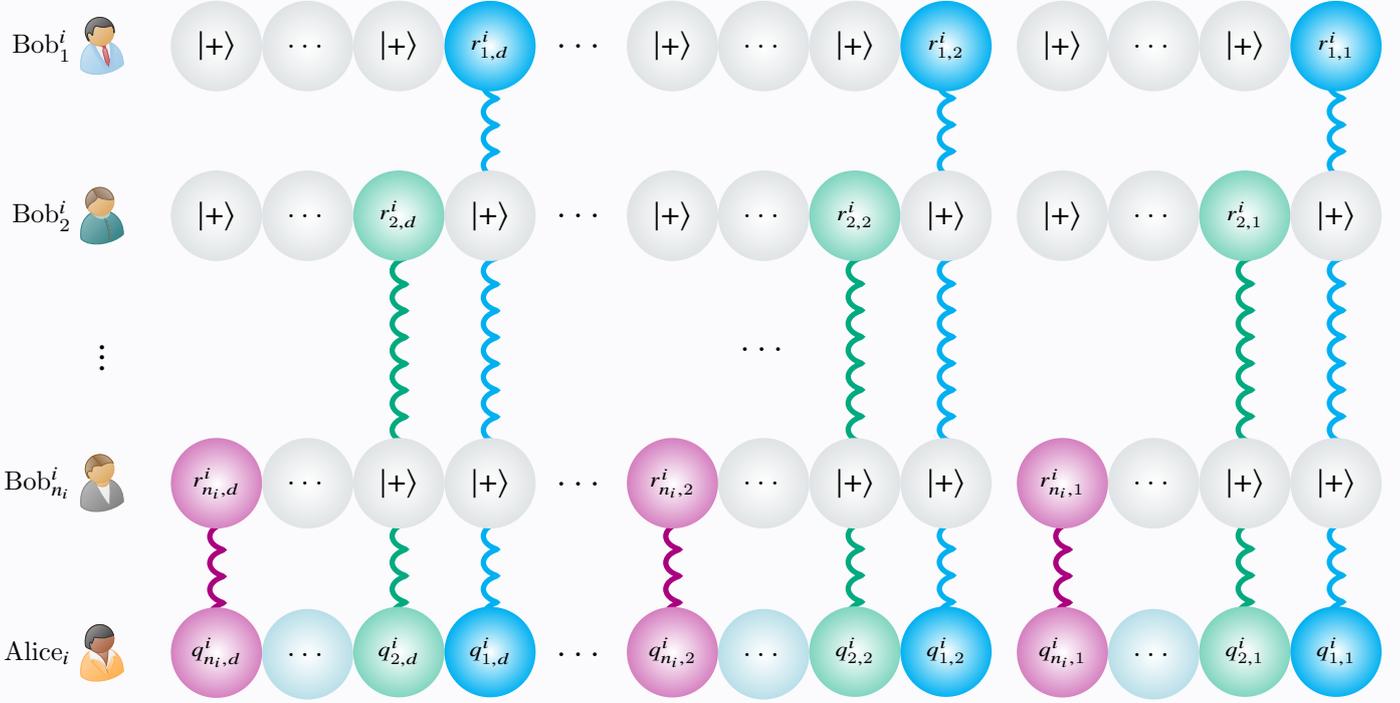
\begin{figure}[H]
	\begin{tcolorbox}
		[
			grow to left by = 1.50 cm,
			grow to right by = 1.50 cm,
			colback = MagentaLight!03,				
			enhanced jigsaw,						
			sharp corners,
			toprule = 1.0 pt,
			bottomrule = 1.0 pt,
			leftrule = 0.1 pt,
			rightrule = 0.1 pt,
			sharp corners,
			center title,
			fonttitle = \bfseries,
		]
		\centering
		\begin{tikzpicture} [ scale = 0.25 ]
			\node
				[
					alice,
					scale = 1.50,
					anchor = center,
					label = { [ label distance = 0.00 cm ] west: Alice$_{ i }$ }
				]
				(Alice) { };
			\matrix
				[
				matrix of nodes, nodes in empty cells,
				column sep = 0.000 mm, right = 0.50 of Alice,
				nodes = { circle, minimum size = 12 mm, semithick, font = \footnotesize },
				]
				{
					\node [ shade, outer color = RedPurple!50, inner color = white ] (A-n-d) { $q_{ n_{ i }, d }^{ i }$ }; &
					\node [ shade, outer color = WordAquaLighter60, inner color = white ] { \large \dots }; &
					\node [ shade, outer color = GreenLighter2!50, inner color = white ] (A-2-d) { $q_{ 2, d }^{ i }$ }; &
					\node [ shade, outer color = WordBlueVeryLight, inner color = white ] (A-1-d) { $q_{ 1, d }^{ i }$ }; &
					\node { \Large \dots }; &
					\node [ shade, outer color = RedPurple!50, inner color = white ] (A-n-2) { $q_{ n_{ i }, 2 }^{ i }$ }; &
					\node [ shade, outer color = WordAquaLighter60, inner color = white ] { \large \dots }; &
					\node [ shade, outer color = GreenLighter2!50, inner color = white ] (A-2-2) { $q_{ 2, 2 }^{ i }$ }; &
					\node [ shade, outer color = WordBlueVeryLight, inner color = white ] (A-1-2) { $q_{ 1, 2 }^{ i }$ }; &
					\node [ minimum size = 0.1 mm ] { }; &
					\node [ shade, outer color = RedPurple!50, inner color = white ] (A-n-1) { $q_{ n_{ i }, 1 }^{ i }$ }; &
					\node [ shade, outer color = WordAquaLighter60, inner color = white ] { \large \dots }; &
					\node [ shade, outer color = GreenLighter2!50, inner color = white ] (A-2-1) { $q_{ 2, 1 }^{ i }$ }; &
					\node [ shade, outer color = WordBlueVeryLight, inner color = white ] (A-1-1) { $q_{ 1, 1 }^{ i }$ };
					\\
				};
			\node
				[
					bob,
					scale = 1.50,
					anchor = center,
					above = 1.50 cm of Alice,
					label = { [ label distance = 0.00 cm ] west: Bob$_{ n_{ i } }^{ i }$ }
				]
				(Bob) { };
				\matrix
				[
					column sep = 0.000 mm, right = 0.50 of Bob,
					nodes = { circle, minimum size = 12 mm, semithick, font = \footnotesize },
				]
				{
					\node [ shade, outer color = RedPurple!50, inner color = white ] (B-n-d) { $r_{ n_{ i }, d }^{ i }$ }; &
					\node [ shade, outer color = WordIceBlue, inner color = white ] { \large \dots }; &
					\node [ shade, outer color = WordIceBlue, inner color = white ] { \large $\ket{ + }$ }; &
					\node [ shade, outer color = WordIceBlue, inner color = white ] { \large $\ket{ + }$ }; &
					\node { \Large \dots }; &
					\node [ shade, outer color = RedPurple!50, inner color = white ] (B-n-2) { $r_{ n_{ i }, 2 }^{ i }$ }; &
					\node [ shade, outer color = WordIceBlue, inner color = white ] { \large \dots }; &
					\node [ shade, outer color = WordIceBlue, inner color = white ] { \large $\ket{ + }$ }; &
					\node [ shade, outer color = WordIceBlue, inner color = white ] { \large $\ket{ + }$ }; &
					\node [ minimum size = 0.1 mm ] { }; &
					\node [ shade, outer color = RedPurple!50, inner color = white ] (B-n-1) { $r_{ n_{ i }, 1 }^{ i }$ }; &
					\node [ shade, outer color = WordIceBlue, inner color = white ] { \large \dots }; &
					\node [ shade, outer color = WordIceBlue, inner color = white ] { \large $\ket{ + }$ }; &
					\node [ shade, outer color = WordIceBlue, inner color = white ] { \large $\ket{ + }$ };
					\\
				};
			\node
				[
					anchor = center,
					above = 1.00 cm of Bob,
				]
				(Dots) { \Large \vdots };
			\matrix
				[
					column sep = 0.000 mm, right = 3.00 of Dots,
					nodes = { circle, minimum size = 12 mm, semithick, font = \footnotesize },
				]
				(DotsMatrix)
				{
					\node { }; &
					\node { }; &
					\node { }; &
					\node { }; &
					\node { \Large \dots }; &
					\node { };
					\\
				};
			\node
				[
					charlie,
					scale = 1.50,
					anchor = center,
					above = 1.00 cm of Dots,
					label = { [ label distance = 0.00 cm ] west: Bob$_{ 2 }^{ i }$ }
				]
				(Charlie) { };
			\matrix
				[
					column sep = 0.000 mm, right = 0.50 of Charlie,
					nodes = { circle, minimum size = 12 mm, semithick, font = \footnotesize },
				]
				{
					\node [ shade, outer color = WordIceBlue, inner color = white ] { \large $\ket{ + }$ }; &
					\node [ shade, outer color = WordIceBlue, inner color = white ] { \large \dots }; &
					\node [ shade, outer color = GreenLighter2!50, inner color = white ] (C-2-d) { $r_{ 2, d }^{ i }$ }; &
					\node [ shade, outer color = WordIceBlue, inner color = white ] { \large $\ket{ + }$ }; &
					\node { \Large \dots }; &
					\node [ shade, outer color = WordIceBlue, inner color = white ] { \large $\ket{ + }$ }; &
					\node [ shade, outer color = WordIceBlue, inner color = white ] { \large \dots }; &
					\node [ shade, outer color = GreenLighter2!50, inner color = white ] (C-2-2) { $r_{ 2, 2 }^{ i }$ }; &
					\node [ shade, outer color = WordIceBlue, inner color = white ] { \large $\ket{ + }$ }; &
					\node [ minimum size = 0.1 mm ] { }; &
					\node [ shade, outer color = WordIceBlue, inner color = white ] { \large $\ket{ + }$ }; &
					\node [ shade, outer color = WordIceBlue, inner color = white ] { \large \dots }; &
					\node [ shade, outer color = GreenLighter2!50, inner color = white ] (C-2-1) { $r_{ 2, 1 }^{ i }$ }; &
					\node [ shade, outer color = WordIceBlue, inner color = white ] { \large $\ket{ + }$ };
					\\
				};
			\node
				[
					dave,
					scale = 1.50,
					anchor = center,
					above = 1.50 cm of Charlie,
					label = { [ label distance = 0.00 cm ] west: Bob$_{ 1 }^{ i }$ }
				]
				(Dave) { };
			\matrix
				[
					column sep = 0.000 mm, right = 0.50 of Dave,
					nodes = { circle, minimum size = 12 mm, semithick, font = \footnotesize },
				]
				{
					\node [ shade, outer color = WordIceBlue, inner color = white ] { \large $\ket{ + }$ }; &
					\node [ shade, outer color = WordIceBlue, inner color = white ] { \large \dots }; &
					\node [ shade, outer color = WordIceBlue, inner color = white ] { \large $\ket{ + }$ }; &
					\node [ shade, outer color = WordBlueVeryLight, inner color = white ] (D-1-d) { $r_{ 1, d }^{ i }$ }; &
					\node { \Large \dots }; &
					\node [ shade, outer color = WordIceBlue, inner color = white ] { \large $\ket{ + }$ }; &
					\node [ shade, outer color = WordIceBlue, inner color = white ] { \large \dots }; &
					\node [ shade, outer color = WordIceBlue, inner color = white ] { \large $\ket{ + }$ }; &
					\node [ shade, outer color = WordBlueVeryLight, inner color = white ] (D-1-2) { $r_{ 1, 2 }^{ i }$ }; &
					\node [ minimum size = 0.1 mm ] { }; &
					\node [ shade, outer color = WordIceBlue, inner color = white ] { \large $\ket{ + }$ }; &
					\node [ shade, outer color = WordIceBlue, inner color = white ] { \large \dots }; &
					\node [ shade, outer color = WordIceBlue, inner color = white ] { \large $\ket{ + }$ }; &
					\node [ shade, outer color = WordBlueVeryLight, inner color = white ] (D-1-1) { $r_{ 1, 1 }^{ i }$ };
					\\
				};
			\begin{scope}[on background layer]
				\draw [ RedPurple, -, >=stealth, line width = 0.7 mm, decoration = coil, decorate ]
				(A-n-d.center) -- (B-n-d.center);
				\draw [ RedPurple, -, >=stealth, line width = 0.7 mm, decoration = coil, decorate ]
				(A-n-2.center) -- (B-n-2.center);
				\draw [ RedPurple, -, >=stealth, line width = 0.7 mm, decoration = coil, decorate ]
				(A-n-1.center) -- (B-n-1.center);
				\draw [ GreenLighter2, -, >=stealth, line width = 0.7 mm, decoration = coil, decorate ]
				(A-2-d.center) -- (C-2-d.center);
				\draw [ GreenLighter2, -, >=stealth, line width = 0.7 mm, decoration = coil, decorate ]
				(A-2-2.center) -- (C-2-2.center);
				\draw [ GreenLighter2, -, >=stealth, line width = 0.7 mm, decoration = coil, decorate ]
				(A-2-1.center) -- (C-2-1.center);
				\draw [ WordBlueVeryLight, -, >=stealth, line width = 0.7mm , decoration = coil, decorate ]
				(A-1-d.center) -- (D-1-d.center);
				\draw [ WordBlueVeryLight, -, >=stealth, line width = 0.7mm , decoration = coil, decorate ]
				(A-1-2.center) -- (D-1-2.center);
				\draw [ WordBlueVeryLight, -, >=stealth, line width = 0.7mm , decoration = coil, decorate ]
				(A-1-1.center) -- (D-1-1.center);
			\end{scope}
			\node
				[
				above right = 2.00 cm and 7.50 cm of Dave, anchor = center, shade, top color = MagentaLighter, bottom color = black, rectangle, text width = 9.50 cm, align = center
				]
				(Label)
				{ \color{white} \textbf{The entangled sequences of qubits distributed to Alice and the news aggregators in her active network.} };
			\node [ anchor = west, below = 0.50 cm of Alice ] (PhantomNode1) { };
			\node [ anchor = west, above = 0.25 cm of Label ] (PhantomNode2) { };
		\end{tikzpicture}
	\end{tcolorbox}
	\caption{The above figure depicts the entangled sequences of qubits distributed to Alice and the news aggregators in her active network. Qubits that belong to the same $\ket{ \Phi^{ + } }$ pair are indicated by the same color and a wavy line that connects them. Specifically, blue indicates the EPR pairs shared between Alice$_{ i }$ and Bob$_{ 1 }^{ i }$, which occupy position $1$ in each $n_{ i }$-tuple of the $\mathbf { q }^{ i }$ and $\mathbf { r }_{ 1 }^{ i }$ sequences. Analogously, green is used for the EPR pairs shared between Alice$_{ i }$ and Bob$_{ 2 }^{ i }$, and red for the EPR pairs linking Alice$_{ i }$ and Bob$_{ n_{ i } }^{ i }$. The silver qubits designate those in the $\ket{ + }$ state that fill the remaining positions of the sequences $\mathbf { r }_{ 1 }^{ i }$, $\mathbf { r }_{ 2 }^{ i }$, \dots, $\mathbf { r }_{ n_{ i } }^{ i }$.
	}
	\label{fig: Alice's and her Bobs' Quantum Sequences}
\end{figure}

The visual convention is to draw qubits that belong to the same $\ket{ \Phi^{ + } }$ pair with the same color. Blue is used to indicate the EPR pairs shared between Alice$_{ i }$ and Bob$_{ 1 }^{ i }$, which occupy position $1$ in each $n_{ i }$-tuple of the $\mathbf { q }^{ i }$ and $\mathbf { r }_{ 1 }^{ i }$ sequences. Analogously, green is used for the EPR pairs shared between Alice$_{ i }$ and Bob$_{ 2 }^{ i }$ located in the second position of every $n_{ i }$-tuple of the $\mathbf { q }^{ i }$ and $\mathbf { r }_{ 2 }^{ i }$ sequences, and red signifies EPR pairs linking Alice$_{ i }$ and Bob$_{ n_{ i } }^{ i }$ occupying the last position of every $n_{ i }$-tuple of the $\mathbf { q }^{ i }$ and $\mathbf { r }_{ n_{ i } }^{ i }$ sequences. The silver qubits designate those in the $\ket{ + }$ state that fill the remaining positions of the sequences $\mathbf { r }_{ 1 }^{ i }$, $\mathbf { r }_{ 2 }^{ i }$, \dots, $\mathbf { r }_{ n_{ i } }^{ i }$. The intuition behind the construction of the above quantum sequences is outlined below.

\begin{enumerate}  [ left = 0.50 cm, labelsep = 1.00 cm, start = 1 ]
	\renewcommand\labelenumi{(\textbf{I}$_{ \theenumi }$)}
	\item	Alice$_{ i }$, $1 \leq i \leq m$, is linked to each one of her agents Bob$_{ 1 }^{ i }$, Bob$_{ 2 }^{ i }$, \dots, Bob$_{ n_{ i } }^{ i }$ because her verification sequence $\mathbf { q }^{ i }$ is entangled with their verification sequences sequences $\mathbf { r }_{ 1 }^{ i }$, $\mathbf { r }_{ 2 }^{ i }$, \dots, $\mathbf { r }_{ n_{ i } }^{ i }$.
	\item	All these quantum sequences are made up of $d$ in total $n_{ i }$-tuples of qubits.
	\item	Sequence $\mathbf { q }^{ i }$ is made up exclusively from entangled qubits.
	\item	In $\mathbf { q }^{ i }$ the qubits in position $1$, namely $q_{ 1, 1 }^{ i }$, $q_{ 1, 2 }^{ i }$, \dots, $q_{ 1, d }^{ i }$, are entangled with the corresponding qubits $r_{ 1, 1 }^{ i }$, $r_{ 1, 2 }^{ i }$, \dots, $r_{ 1, d }^{ i }$ of the sequence $\mathbf { r }_{ 1 }^{ i }$ that belongs to Bob$_{ 1 }^{ i }$. This is because $q_{ 1, l }^{ i }$ and $r_{ 1, l }^{ i }$, $1 \leq l \leq d$, belong to the same $\ket{ \Phi^{ + } }$ pair by construction.
	\item	For precisely the same reason, the qubits in position $k, \ k = 2, \dots, n_{ i }$, i.e., $q_{ k, 1 }^{ i }$, $q_{ k, 2 }^{ i }$, \dots, $q_{ k, d }^{ i }$, are entangled with the corresponding qubits $r_{ k, 1 }^{ i }$, $r_{ k, 2 }^{ i }$, \dots, $r_{ k, d }^{ i }$ of the sequence $\mathbf { r }_{ k }^{ i }$ owned by Bob$_{ k }^{ i }$.
	\item	In every sequence $\mathbf { r }_{ k }^{ i }$, $k = 1, \dots, n_{ i }$, the qubits $r_{ k, l }^{ i }$, $l = 1, \dots, d$, that occupy the $k^{ th }$ position in each $n_{ i }$-tuple, are entangled with the corresponding qubits $q_{ k, l }^{ i }$ of $\mathbf { q }^{ i }$. All other qubits are in the $\ket{ + }$ state.
\end{enumerate}

In Section \ref{sec: Discussion and Conclusions}, where we discuss the effect of the degree of accuracy of the QNVA, we shall suggest appropriate values for $d$.

In view of the structural form of the sequences defined by formulae \eqref{eq: Alice's Verification Sequence} and \eqref{eq: Bob's Verification Sequence}, we also refer to them as $( d, n_{ i } )$ quantum sequences because they are constructed by $d$ repetitions of structurally similar tuples of the same length, namely $n_{ i }$. These $d$ tuples are enumerated as shown in \eqref{eq: Alice's Verification Sequence} and \eqref{eq: Bob's Verification Sequence}, that is tuple $1$ is the rightmost tuple and tuple $d$ is the leftmost tuple.

\subsection{Entanglement validation phase} \label{subsec: Entanglement Validation Phase}

Undoubtedly, this is a most crucial phase, as the entire algorithm hinges upon the existence of entanglement. Without guaranteed entanglement, the algorithm's functionality is compromised. The validation procedure can result into two distinct outcomes. If entanglement is successfully ascertained, the QNVA can proceed to confidently verify the information at hand. Failure to validate entanglement indicates the absence of the necessary entanglement. This could stem from either noisy quantum channels or malicious interference from an adversary. Regardless of the cause, the only viable solution is to halt the ongoing algorithm execution and commence the entire procedure anew, after implementing corrective measures.

Given its utmost significance, this phase has undergone thorough scrutiny in the existing literature. Our algorithm adheres to the sophisticated methodologies outlined in prior works, including \cite{Neigovzen2008, Feng2019, Wang2022a, Yang2022, Qu2023, Ikeda2023c}. Hence, to preclude redundant exposition, we direct the reader to the previously mentioned bibliography for all the details essential for the successful implementation of this phase.

\subsection{The news verification phase} \label{subsec: The News Verification Phase}

Our algorithm classifies news as true or fake during the third and last phase, aptly named \textbf{news verification phase}. To initiate this phase, Alice$_{ i }$ and the agents in her active network, Bob$_{ 1 }^{ i }$, Bob$_{ 2 }^{ i }$, \dots, Bob$_{ n_{ i } }^{ i }$, measure their quantum sequences to obtain the classical bit sequences denoted by the lower case bold letters $\mathbf { a }^{ i }$ and $\mathbf { b }_{ 1 }^{ i }$, $\mathbf { b }_{ 2 }^{ i }$, \dots, $\mathbf { b }_{ n_{ i } }^{ i }$, respectively. Taking into account the entanglement distribution scheme of Definition \ref{def: Entanglement Distribution Scheme}, we see that the measurement has revealed some important correlations among these sequences. These correlations are depicted in the next Figure \ref{fig: Alice and her Bobs' Classical Bit Sequences}.

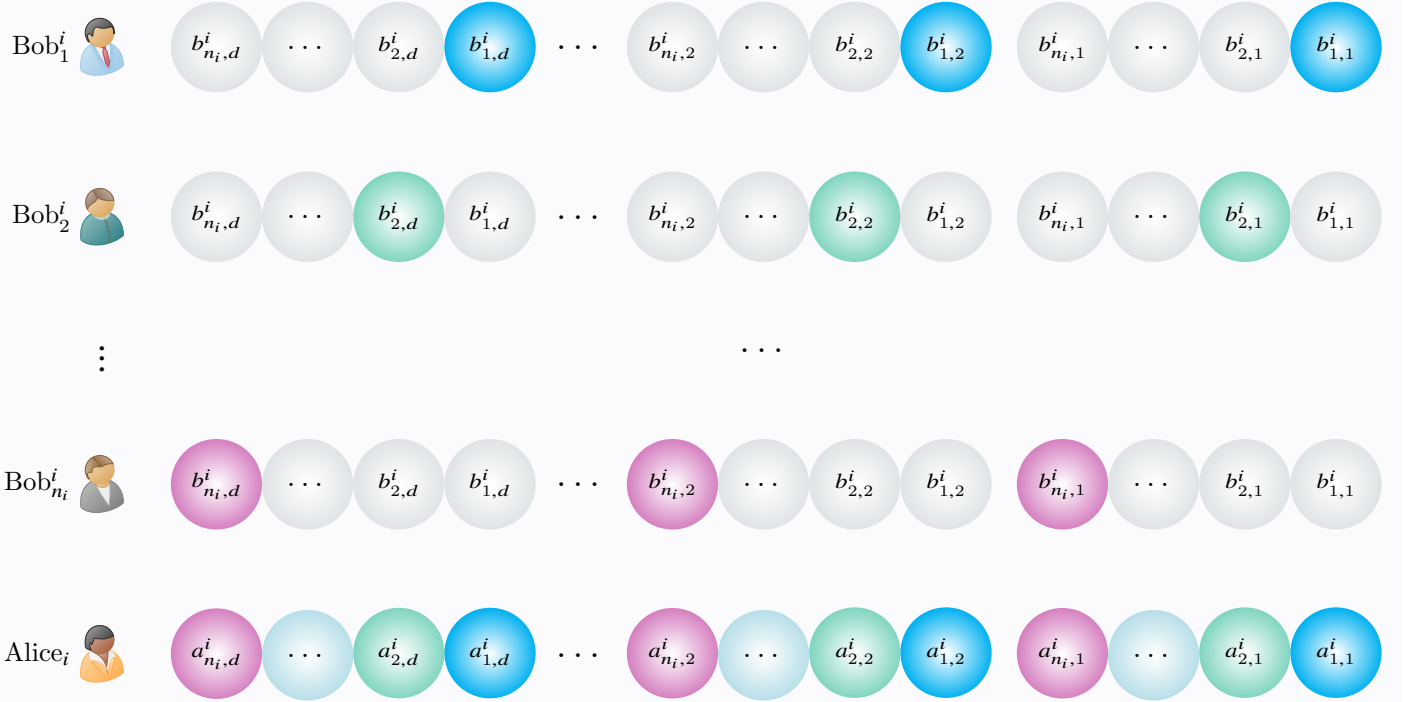
\begin{figure}[H]
	\begin{tcolorbox}
		[
			grow to left by = 1.50 cm,
			grow to right by = 1.50 cm,
			colback = MagentaLight!03,				
			enhanced jigsaw,						
			sharp corners,
			toprule = 1.0 pt,
			bottomrule = 1.0 pt,
			leftrule = 0.1 pt,
			rightrule = 0.1 pt,
			sharp corners,
			center title,
			fonttitle = \bfseries
		]
		\centering
		\begin{tikzpicture} [ scale = 0.25 ]
			\node
				[
					alice,
					scale = 1.50,
					anchor = center,
					label = { [ label distance = 0.00 cm ] west: Alice$_{ i }$ }
				]
				(Alice) { };
			\matrix
				[
					matrix of nodes, nodes in empty cells,
					column sep = 0.000 mm, right = 0.50 of Alice,
					nodes = { circle, minimum size = 12 mm, semithick, font = \footnotesize },
				]
				{
					\node [ shade, outer color = RedPurple!50, inner color = white ] (A-n-d) { $a_{ n_{ i }, d }^{ i }$ }; &
					\node [ shade, outer color = WordAquaLighter60, inner color = white ] { \large \dots }; &
					\node [ shade, outer color = GreenLighter2!50, inner color = white ] (A-2-d) { $a_{ 2, d }^{ i }$ }; &
					\node [ shade, outer color = WordBlueVeryLight, inner color = white ] (A-1-d) { $a_{ 1, d }^{ i }$ }; &
					\node { \Large \dots }; &
					\node [ shade, outer color = RedPurple!50, inner color = white ] (A-n-2) { $a_{ n_{ i }, 2 }^{ i }$ }; &
					\node [ shade, outer color = WordAquaLighter60, inner color = white ] { \large \dots }; &
					\node [ shade, outer color = GreenLighter2!50, inner color = white ] (A-2-2) { $a_{ 2, 2 }^{ i }$ }; &
					\node [ shade, outer color = WordBlueVeryLight, inner color = white ] (A-1-2) { $a_{ 1, 2 }^{ i }$ }; &
					\node [ minimum size = 0.1 mm ] { }; &
					\node [ shade, outer color = RedPurple!50, inner color = white ] (A-n-1) { $a_{ n_{ i }, 1 }^{ i }$ }; &
					\node [ shade, outer color = WordAquaLighter60, inner color = white ] { \large \dots }; &
					\node [ shade, outer color = GreenLighter2!50, inner color = white ] (A-2-1) { $a_{ 2, 1 }^{ i }$ }; &
					\node [ shade, outer color = WordBlueVeryLight, inner color = white ] (A-1-1) { $a_{ 1, 1 }^{ i }$ };
					\\
				};
			\node
				[
					bob,
					scale = 1.50,
					anchor = center,
					above = 1.50 cm of Alice,
					label = { [ label distance = 0.00 cm ] west: Bob$_{ n_{ i } }^{ i }$ }
				]
				(Bob) { };
			\matrix
				[
					column sep = 0.000 mm, right = 0.50 of Bob,
					nodes = { circle, minimum size = 12 mm, semithick, font = \footnotesize },
				]
				{
					\node [ shade, outer color = RedPurple!50, inner color = white ] (B-n-d) { $b_{ n_{ i }, d }^{ i }$ }; &
					\node [ shade, outer color = WordIceBlue, inner color = white ] { \large \dots }; &
					\node [ shade, outer color = WordIceBlue, inner color = white ] { $b_{ 2, d }^{ i }$ }; &
					\node [ shade, outer color = WordIceBlue, inner color = white ] { $b_{ 1, d }^{ i }$ }; &
					\node { \Large \dots }; &
					\node [ shade, outer color = RedPurple!50, inner color = white ] (B-n-2) { $b_{ n_{ i }, 2 }^{ i }$ }; &
					\node [ shade, outer color = WordIceBlue, inner color = white ] { \large \dots }; &
					\node [ shade, outer color = WordIceBlue, inner color = white ] { $b_{ 2, 2 }^{ i }$ }; &
					\node [ shade, outer color = WordIceBlue, inner color = white ] { $b_{ 1, 2 }^{ i }$ }; &
					\node [ minimum size = 0.1 mm ] { }; &
					\node [ shade, outer color = RedPurple!50, inner color = white ] (B-n-1) { $b_{ n_{ i }, 1 }^{ i }$ }; &
					\node [ shade, outer color = WordIceBlue, inner color = white ] { \large \dots }; &
					\node [ shade, outer color = WordIceBlue, inner color = white ] { $b_{ 2, 1 }^{ i }$ }; &
					\node [ shade, outer color = WordIceBlue, inner color = white ] { $b_{ 1, 1 }^{ i }$ };
					\\
				};
			\node
				[
					anchor = center,
					above = 1.00 cm of Bob,
				]
				(Dots) { \Large \vdots };
			\matrix
				[
					column sep = 0.000 mm, right = 3.00 of Dots,
					nodes = { circle, minimum size = 12 mm, semithick, font = \footnotesize },
				]
				(DotsMatrix)
				{
					\node { }; &
					\node { }; &
					\node { }; &
					\node { }; &
					\node { \Large \dots }; &
					\node { };
					\\
				};
			\node
				[
					charlie,
					scale = 1.50,
					anchor = center,
					above = 1.00 cm of Dots,
					label = { [ label distance = 0.00 cm ] west: Bob$_{ 2 }^{ i }$ }
				]
				(Charlie) { };
			\matrix
				[
					column sep = 0.000 mm, right = 0.50 of Charlie,
					nodes = { circle, minimum size = 12 mm, semithick, font = \footnotesize },
				]
				{
					\node [ shade, outer color = WordIceBlue, inner color = white ] { $b_{ n_{ i }, d }^{ i }$ }; &
					\node [ shade, outer color = WordIceBlue, inner color = white ] { \large \dots }; &
					\node [ shade, outer color = GreenLighter2!50, inner color = white ] (C-2-d) { $b_{ 2, d }^{ i }$ }; &
					\node [ shade, outer color = WordIceBlue, inner color = white ] { $b_{ 1, d }^{ i }$ }; &
					\node { \Large \dots }; &
					\node [ shade, outer color = WordIceBlue, inner color = white ] { $b_{ n_{ i }, 2 }^{ i }$ }; &
					\node [ shade, outer color = WordIceBlue, inner color = white ] { \large \dots }; &
					\node [ shade, outer color = GreenLighter2!50, inner color = white ] (C-2-2) { $b_{ 2, 2 }^{ i }$ }; &
					\node [ shade, outer color = WordIceBlue, inner color = white ] { $b_{ 1, 2 }^{ i }$ }; &
					\node [ minimum size = 0.1 mm ] { }; &
					\node [ shade, outer color = WordIceBlue, inner color = white ] { $b_{ n_{ i }, 1 }^{ i }$ }; &
					\node [ shade, outer color = WordIceBlue, inner color = white ] { \large \dots }; &
					\node [ shade, outer color = GreenLighter2!50, inner color = white ] (C-2-1) { $b_{ 2, 1 }^{ i }$ }; &
					\node [ shade, outer color = WordIceBlue, inner color = white ] { $b_{ 1, 1 }^{ i }$ };
					\\
				};
			\node
				[
					dave,
					scale = 1.50,
					anchor = center,
					above = 1.50 cm of Charlie,
					label = { [ label distance = 0.00 cm ] west: Bob$_{ 1 }^{ i }$ }
				]
				(Dave) { };
			\matrix
				[
					column sep = 0.000 mm, right = 0.50 of Dave,
					nodes = { circle, minimum size = 12 mm, semithick, font = \footnotesize },
				]
				{
					\node [ shade, outer color = WordIceBlue, inner color = white ] { $b_{ n_{ i }, d }^{ i }$ }; &
					\node [ shade, outer color = WordIceBlue, inner color = white ] { \large \dots }; &
					\node [ shade, outer color = WordIceBlue, inner color = white ] { $b_{ 2, d }^{ i }$ }; &
					\node [ shade, outer color = WordBlueVeryLight, inner color = white ] (D-1-d) { $b_{ 1, d }^{ i }$ }; &
					\node { \Large \dots }; &
					\node [ shade, outer color = WordIceBlue, inner color = white ] { $b_{ n_{ i }, 2 }^{ i }$ }; &
					\node [ shade, outer color = WordIceBlue, inner color = white ] { \large \dots }; &
					\node [ shade, outer color = WordIceBlue, inner color = white ] { $b_{ 2, 2 }^{ i }$ }; &
					\node [ shade, outer color = WordBlueVeryLight, inner color = white ] (D-1-2) { $b_{ 1, 2 }^{ i }$ }; &
					\node [ minimum size = 0.1 mm ] { }; &
					\node [ shade, outer color = WordIceBlue, inner color = white ] { $b_{ n_{ i }, 1 }^{ i }$ }; &
					\node [ shade, outer color = WordIceBlue, inner color = white ] { \large \dots }; &
					\node [ shade, outer color = WordIceBlue, inner color = white ] { $b_{ 2, 1 }^{ i }$ }; &
					\node [ shade, outer color = WordBlueVeryLight, inner color = white ] (D-1-1) { $b_{ 1, 1 }^{ i }$ };
					\\
				};
			\node
				[
					above right = 2.00 cm and 8.00 cm of Dave, anchor = center, shade, top color = WordDarkTeal, bottom color = black, rectangle, text width = 10.00 cm, align = center
				]
				(Label)
				{ \color{white} \textbf{The classical bit sequences resulting after the measurement of the quantum sequences and their correlations.} };
			\node [ anchor = west, below = 0.50 cm of Alice ] (PhantomNode1) { };
			\node [ anchor = west, above = 0.25 cm of Label ] (PhantomNode2) { };
		\end{tikzpicture}
	\end{tcolorbox}
	\caption{This figure shows the classical bit sequences that result after Alice and the news aggregators measure their quantum sequences. The correlations among pairs of bits in these sequences are visualized by drawing correlated pairs with the same color.
	}
	\label{fig: Alice and her Bobs' Classical Bit Sequences}
\end{figure}

\begin{definition} [Classical Bit Sequences] \label{def: Classical Bit Sequences}
	Upon measuring their qubit sequences, news verifier Alice$_{ i }$ and news aggregators Bob$_{ 1 }^{ i }$, Bob$_{ 2 }^{ i }$, \dots, Bob$_{ n_{ i } }^{ i }$, obtain the classical bit sequences $\mathbf { a }^{ i }$ and $\mathbf { b }_{ 1 }^{ i }$, $\mathbf { b }_{ 2 }^{ i }$, \dots, $\mathbf { b }_{ n_{ i } }^{ i }$, respectively, that can be written explicitly as follows.
	\begin{align}
		\mathbf { a }^{ i }
		&=
		\underbrace { \colorbox {WordAquaLighter40} { $a_{ n_{ i }, d }^{ i } \dots a_{ k, d }^{ i } \dots a_{ 1, d }^{ i }$ } }_{ \text{ tuple } d }
		\cdots
		\underbrace { \colorbox {WordAquaLighter60} { $a_{ n_{ i }, 2 }^{ i } \dots a_{ k, 2 }^{ i } \dots a_{ 1, 2 }^{ i }$ } }_{ \text{ tuple } 2 }
		\underbrace { \colorbox {WordAquaLighter80} { $a_{ n_{ i }, 1 }^{ i } \dots a_{ k, 1 }^{ i } \dots a_{ 1, 1 }^{ i }$ } }_{ \text{ tuple } 1 }
		\ , \text{ and }
		\label{eq: Alice's Classical Bit Sequence}
		\\
		\mathbf { b }_{ k }^{ i }
		&=
		\underbrace { \colorbox {WordLightGreen} { $b_{ n_{ i }, d }^{ i } \dots b_{ k, d }^{ i } \dots b_{ 1, d }^{ i }$ } }_{ \text{ tuple } d }
		\cdots
		\underbrace { \colorbox {WordLightGreen!50} { $b_{ n_{ i }, 2 }^{ i } \dots b_{ k, 2 }^{ i } \dots b_{ 1, 2 }^{ i }$ } }_{ \text{ tuple } 2 } \
		\underbrace { \colorbox {WordLightGreen!25} { $b_{ n_{ i }, 1 }^{ i } \dots b_{ k, 1 }^{ i } \dots b_{ 1, 1 }^{ i }$ } }_{ \text{ tuple } 1 }
		\label{eq: Bob's Classical Bit Sequence}
		\ ,
	\end{align}
	where $k = 1, \dots, n_{ i }$.
\end{definition}

Although the sequences defined by formulae \eqref{eq: Alice's Classical Bit Sequence} and \eqref{eq: Bob's Classical Bit Sequence} consist of classical bits, their structural form is identical to that of the original qubit sequences. So, we will also refer to them as $( d, n_{ i } )$ classical sequences because they are constructed by repeating $d$ times structurally similar tuples of length $n_{ i }$. Following this line of thought, we may consider an arbitrary $( d, n_{ i } )$ sequence $\mathbf { s }$ made of symbols from some fixed alphabet, and express it in an alternative but equivalent way, emphasizing its composition in terms of $n_{ i }$-tuples, as follows.

\begin{align}
	\mathbf { s }
	&=
	\underbrace { \colorbox {WordAquaLighter40} { $s_{ n_{ i }, d } \dots s_{ 2, d } s_{ 1, d }$ } }_{ \text{ tuple } d }
	\cdots
	\underbrace { \colorbox {WordAquaLighter60} { $s_{ n_{ i }, 2 } \dots s_{ 2, 2 } s_{ 1, 2 }$ } }_{ \text{ tuple } 2 }
	\underbrace { \colorbox {WordAquaLighter80} { $s_{ n_{ i }, 1 } \dots s_{ 2, 1 } s_{ 1, 1 }$ } }_{ \text{ tuple } 1 }
	\nonumber \\
	\mathbf { s }
	&=
	\hspace{ 1.05 cm }
	\overset { \downarrow } { \colorbox {MagentaVeryLight} { $\mathbf { s }_{ d }$ } }
	\hspace{ 1.00 cm }
	\cdots
	\hspace{ 1.05 cm }
	\overset { \downarrow } { \colorbox {MagentaVeryLight!40!MyLightRed} { $\mathbf { s }_{ 2 }$ } }
	\hspace{ 2.00 cm }
	\overset { \downarrow } { \colorbox {MyLightRed} { $\mathbf { s }_{ 1 }$ } }
	\label{eq: Classical Bit Sequence Tuple Form}
	\ .
\end{align}

Let us also clarify that by writing $\mathbf { s } = \mathbf { s }_{ d } \cdots \mathbf { s }_{ 2 } \mathbf { s }_{ 1 }$, where $\mathbf { s }_{ l } = s_{ n_{ i }, l } \dots s_{ 2, l } s_{ 1, l }$ and $1 \leq l \leq d$, we have effectively enumerated the $d$ tuples of $\mathbf { s }$ in a way that $1$ is the rightmost and $d$ is the leftmost tuple. In the rest of our exposition, we will also need a special $n_{ i }$-tuple that is constructed by using a new symbol $\ast$, different from $0$ and $1$. This is denoted by $\mathbf { s }_{ \ast }$ and is referred to the \emph{cryptic} tuple.

\begin{align}
	\mathbf { s }_{ \ast }
	=
	\underbrace { \colorbox {orange!25} { $\ast \ \dots \ \ast \ \ast$ } }_{ n_{ i } \text{ occurrences} }
	\label{eq: The Special Tuple}
	\ .
\end{align}

With the above convention, we may write Alice$_{ i }$'s bit sequence $\mathbf { a }^{ i }$ in the next form, emphasizing the fact that it is composed by $d$ tuples.

\begin{align}
	\mathbf { a }^{ i }
	=
	\colorbox {WordAquaLighter40} { $\mathbf { a }_{ d }$ }
	\cdots
	\colorbox {WordAquaLighter60} { $\mathbf { a }_{ 2 }$ }
	\colorbox {WordAquaLighter80} { $\mathbf { a }_{ 1 }$ }
	\label{eq: Alice's Classical Bit Sequence Tuple Form}
\end{align}

\begin{definition} [Proof Sequences] \label{def: Proof Sequences} \
	News verifier Alice$_{ i }$, $1 \leq i \leq m$, sends two things to all the news aggregators in her active network, namely Bob$_{ 1 }^{ i }$, Bob$_{ 2 }^{ i }$, \dots, Bob$_{ n_{ i } }^{ i }$:
	\begin{itemize}
		\item[$\Diamond$]	The result of the verification check, denoted by $c_{ k }^{ i } \in \mathbb { B }$, which is just a single bit. If the news is true, then $c_{ k }^{ i }$ is just the bit $1$, whereas if the news is fake, $c_{ k }^{ i }$ is the bit $0$.
		\item[$\Diamond$]	A \emph{proof sequence}, denoted by $\mathbf { p }_{ k }^{ i }$, which is intended to convince Bob$_{ k }^{ i }$ that she is honest. Each proof sequence $\mathbf { p }_{ k }^{ i }$ is a $( d, n_{ i } )$ sequence of symbols from $\mathbb { B } \cup \{ \ast \}$, i.e., $\mathbf { p }_{ k }^{ i } = \mathbf { p }_{ d } \ \dots \ \mathbf { p }_{ 2 } \ \mathbf { p }_{ 1 }$. It is critical that these proof sequences be personalized, which effectively means they must be different for every news aggregator. Their construction is described below.
		\begin{itemize}
			\item	If $c_{ k }^{ i } = 1$, the proof $\mathbf { p }_{ k }^{ i }$ sequence sent to news aggregator Bob$_{ k }^{ i }$, $1 \leq k \leq n_{ i }$, also designated by $\mathds{ 1 }_{ k }^{ i }$ for emphasis, has the explicit form shown below.
			\begin{align}
				\mathds{ 1 }_{ k }^{ i }
				&=
				\mathbf { p }_{ d } \ \dots \ \mathbf { p }_{ 2 } \ \mathbf { p }_{ 1 }
				\ ,
				\ \text{ where } \
				\mathbf { p }_{ l }
				=
				\left\{
				\begin{matrix*}[l]
					\ \mathbf { a }_{ l } & \text{ if } a_{ k, l }^{ i } = 1 \\
					\ \mathbf { s }_{ \ast } & \text{ if } a_{ k, l }^{ i } = 0
				\end{matrix*}
				\right.
				\ , \ 1 \leq l \leq d \ .
				\label{eq: Bob's k Proof Sequence for True}
			\end{align}
			\item	Symmetrically, if $c_{ k }^{ i } = 0$, the proof sequence sent to Bob$_{ k }^{ i }$, $1 \leq k \leq n_{ i }$, denoted by $\vmathbb{ 0 }_{ k }^{ i }$ for emphasis, has the following explicit form.
			\begin{align}
				\vmathbb{ 0 }_{ k }^{ i }
				&=
				\mathbf { p }_{ d } \ \dots \ \mathbf { p }_{ 2 } \ \mathbf { p }_{ 1 }
				\ ,
				\ \text{ where } \
				\mathbf { p }_{ l }
				=
				\left\{
				\begin{matrix*}[l]
					\ \mathbf { a }_{ l } & \text{ if } a_{ k, l }^{ i } = 0 \\
					\ \mathbf { s }_{ \ast } & \text{ if } a_{ k, l }^{ i } = 1
				\end{matrix*}
				\right.
				\ , \ 1 \leq l \leq d \ .
				\label{eq: Bob's k Proof Sequence for Fake}
			\end{align}
		\end{itemize}
	\end{itemize}
\end{definition}

A proof sequence for a verification check $c_{ k }^{ i }$ that is faithfully constructed according to Definition \ref{def: Proof Sequences} is said to be \emph{consistent} with $c_{ k }^{ i }$. The previous Definition \ref{def: Proof Sequences} guarantees that, no matter what the verification outcome is, Bob$_{ k }^{ i }$ receives a different proof sequence from every other Bob$_{ k^{ \prime } }^{ i }$ when $k^{ \prime } \neq k$. The other crucial property that characterizes the proof sequences is the fact that besides tuples comprised entirely of $0$ and $1$ bits, they also contain a statistically equal number of cryptic tuples consisting of the special symbol $\ast$. Probabilistic analysis allows Bob$_{ k }^{ i }$ to assess whether or not Alice$_{ i }$ and the other Bobs act honestly and consistently. To this end, it is necessary to examine the positions of the $n_{ i }$-tuples that contain specific combinations of bits.

\begin{definition} \label{def: Tuple Designation} \
	Let
	$
	\mathbf { s }
	=
	\underbrace { s_{ n_{ i }, d } \dots s_{ 2, d } s_{ 1, d } }_{ \text{ tuple } d }
	\cdots
	\underbrace { s_{ n_{ i }, 2 } \dots s_{ 2, 2 } s_{ 1, 2 } }_{ \text{ tuple } 2 }
	\underbrace { s_{ n_{ i }, 1 } \dots s_{ 2, 1 } s_{ 1, 1 } }_{ \text{ tuple } 1 }
	$
	be a $( d, n_{ i } )$ sequence. If $k$ and $k^{ \prime }$, $1 \leq k \neq k^{ \prime } \leq n_{ i }$, are two given indices, and $x, y \in \mathbb { B }$, we define the following sets
	%
	%
	\begin{align}
		P_{ x } ( \mathbf { s }, k )
		&=
		\{ l \ \vert \ s_{ k, l }  = x \}
		\ ,
		\label{eq: P_x Definition}
		\\
		P_{ x, y } ( \mathbf { s }, k, k^{ \prime } )
		&=
		\{ l \ \vert \ s_{ k, l }  = x \wedge s_{ k^{ \prime }, l }  = y \}
		\ , \text{ and }
		\label{eq: P_xy Definition}
		\\
		P_{ \ast } ( \mathbf { s } )
		&=
		\{ l \ \vert \ \mathbf { s }_{ l } = \mathbf { s }_{ \ast } \}
		\ .
		\label{eq: P_* Definition}
	\end{align}
\end{definition}

The previous definition completes the necessary machinery and notation for the presentation of the QNVA. We may now proceed to explain the QNVA in detail and, at the same time, prove its correctness. In the rest of this section, we present the algorithm from the point of view of the typical news aggregator Bob$_{ k }^{ i }$, $1 \leq k \leq n_{ i }$. In what follows, we use the notation $\lvert S \rvert$ to designate the number of elements of a given set $S$.

In today's complex news environment malicious intent can manifest in many subtle ways. One may easily envision the next most critical scenarios.

\begin{enumerate}  [ left = 0.00 cm, labelsep = 1.00 cm, start = 1 ]
	\renewcommand\labelenumi{(\textbf{S}$_{ \theenumi }$)}
	\item	An unreliable and dishonest Alice$_{ i }$ sends to Bob$_{ k }^{ i }$ the verification outcome $c_{ k }^{ i }$, but the latter is accompanied with the wrong proof sequence $\mathbf { p }_{ k }^{ i }$.
	\item	A malicious news aggregator, say Bob$_{ k^{ \prime } }^{ i }$ ($k^{ \prime } \neq k$), falsely claims that he received from Alice$_{ i }$ the opposite verification outcome accompanied by a consistent proof sequence.
	\item	An insidious Alice$_{ i }$ deliberately spreads disinformation and confusion by sending opposite verification outcomes $c_{ k }^{ i }$ and $\overline { c_{ k }^{ i } }$ to Bob$_{ k }^{ i }$ and Bob$_{ k^{ \prime } }^{ i }$ ($k^{ \prime } \neq k$), using consistent proof sequences $\mathbf { p }_{ k }^{ i }$ and $\mathbf { p }_{ k^{ \prime } }^{ i }$ in each case.
\end{enumerate}

The first scenario (\textbf{S}$_{ 1 }$) can be easily detected and countered by the QNVA. The second scenario (\textbf{S}$_{ 2 }$) can also be countered with additional effort. Our algorithm can also deal with the third scenario (\textbf{S}$_{ 3 }$), which reveals the existence of a truly malicious Alice, with some additional inference on the part of Bob. QNVA owes its ability to cope with each one of the above scenarios to the structural properties of the proof sequences. These properties are a direct result of the entanglement distribution scheme explained in Definition \ref{def: Entanglement Distribution Scheme}. The next Proposition \ref{prp: Is Alice's Proof Consistent?} lays the groundwork for the subsequent analysis of our verification procedures.

\begin{proposition} [Dishonest news verifier detection]  \label{prp: Is Alice's Proof Consistent?} \
	Let us assume that news aggregator Bob$_{ k }^{ i }$, $1 \leq k \leq n_{ i }$, has received from his coordinator Alice$_{ i }$, $1 \leq i \leq m$, the verification outcome $c_{ k }^{ i } \in \mathbb { B }$, and the proof sequence $\mathbf { p }_{ k }^{ i }$. If the proof sequence $\mathbf { p }_{ k }^{ i }$ is consistent with $c_{ k }^{ i }$, then it must satisfy the following properties.
	\begin{align}
		\mathbb { E }
		\left[
		\
		\lvert
		\
		P_{ c_{ k }^{ i } }
		( \mathbf { p }_{ k }^{ i }, k )
		\
		\rvert
		\
		\right]
		=
		\mathbb { E }
		\left[
		\
		\lvert
		\
		P_{ \ast }
		( \mathbf { p }_{ k }^{ i } )
		\
		\rvert
		\
		\right]
		&=
		\frac { d } { 2 }
		\ , \ \text{and}
		\label{eq: Expected Number of Tuples with Specific Value in Position k}
		\\
		\mathbb { E }
		\left[
		\
		\lvert
		\
		P_{ c_{ k }^{ i }, c_{ k }^{ i } }
		( \mathbf { p }_{ k }^{ i }, k, k^{ \prime }  )
		\
		\rvert
		\
		\right]
		=
		\mathbb { E }
		\left[
		\
		\lvert
		\
		P_{ c_{ k }^{ i }, \overline { c_{ k }^{ i } } }
		( \mathbf { p }_{ k }^{ i }, k, k^{ \prime }  )
		\
		\rvert
		\
		\right]
		&=
		\frac { d } { 4 }
		\ , \ \forall k^{ \prime } \ , \ 1 \leq k^{ \prime } \neq k \leq n_{ i } \ .
		\label{eq: Expected Number of Tuples with Specific Values in Positions k & k'}
	\end{align}
\end{proposition}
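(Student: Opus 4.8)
The plan is to trace through the construction in Definition~\ref{def: Proof Sequences} and compute the two expectations directly. First I would fix a news aggregator $\text{Bob}_k^i$ and analyze Alice$_i$'s classical bit sequence $\mathbf{a}^i = \mathbf{a}_d \cdots \mathbf{a}_2 \mathbf{a}_1$. Because each qubit $q_{k,l}^i$ owned by Alice$_i$ is the first member of a $\ket{\Phi^+}$ pair, measurement in the computational basis yields $a_{k,l}^i = 0$ or $a_{k,l}^i = 1$ each with probability $\tfrac{1}{2}$, independently across the $d$ tuples $l = 1, \dots, d$. This is the only probabilistic ingredient; everything else in the proof sequence construction is deterministic given the $a_{k,l}^i$.

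**The first identity.** Suppose the proof sequence is consistent with $c_k^i$; without loss of generality take $c_k^i = 1$, so $\mathbf{p}_k^i = \mathds{1}_k^i$ as in \eqref{eq: Bob's k Proof Sequence for True} (the case $c_k^i = 0$ is symmetric, swapping the roles of $0$ and $1$). Then for each tuple index $l$: if $a_{k,l}^i = 1$ the $l$-th tuple of the proof equals $\mathbf{a}_l$, so its position-$k$ entry is $a_{k,l}^i = 1 = c_k^i$ and it is \emph{not} a cryptic tuple; if $a_{k,l}^i = 0$ the $l$-th tuple is $\mathbf{s}_\ast$, which contributes to $P_\ast(\mathbf{p}_k^i)$ but has no bit equal to $c_k^i$ in position $k$. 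Hence $l \in P_{c_k^i}(\mathbf{p}_k^i, k) \iff a_{k,l}^i = 1$ and $l \in P_\ast(\mathbf{p}_k^i) \iff a_{k,l}^i = 0$; these two events are complementary and each has probability $\tfrac{1}{2}$. By linearity of expectation over $l = 1, \dots, d$, both cardinalities have expectation $d/2$, giving \eqref{eq: Expected Number of Tuples with Specific Value in Position k}.

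**The second identity.** Now fix another index $k' \neq k$. The entry $a_{k',l}^i$ is the measurement outcome of qubit $q_{k',l}^i$, which is entangled with $r_{k',l}^i$ (owned by $\text{Bob}_{k'}^i$) but lies in a \emph{different} $\ket{\Phi^+}$ pair than $q_{k,l}^i$; therefore $a_{k,l}^i$ and $a_{k',l}^i$ are independent fair coin flips. In the tuples where $\mathbf{p}_l = \mathbf{a}_l$ (i.e.\ $a_{k,l}^i = c_k^i = 1$), position $k'$ carries the bit $a_{k',l}^i$, which equals $c_k^i$ or $\overline{c_k^i}$ each with probability $\tfrac{1}{2}$; in the cryptic tuples there is no bit in position $k$ equal to $c_k^i$, so they contribute to neither set in \eqref{eq: Expected Number of Tuples with Specific Values in Positions k & k'}. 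Thus $l \in P_{c_k^i, c_k^i}(\mathbf{p}_k^i, k, k') \iff (a_{k,l}^i = 1 \wedge a_{k',l}^i = 1)$ and $l \in P_{c_k^i, \overline{c_k^i}}(\mathbf{p}_k^i, k, k') \iff (a_{k,l}^i = 1 \wedge a_{k',l}^i = 0)$, each a conjunction of two independent probability-$\tfrac{1}{2}$ events, hence probability $\tfrac{1}{4}$. Summing over the $d$ tuples yields expectation $d/4$ for each, which is \eqref{eq: Expected Number of Tuples with Specific Values in Positions k & k'}.

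**Main obstacle.** There is no real analytic difficulty here — the argument is a short computation once the probabilistic model is pinned down. The only thing requiring care is the \emph{independence} claim used in the second identity: one must check against Definition~\ref{def: Entanglement Distribution Scheme} that $q_{k,l}^i$ and $q_{k',l}^i$ genuinely belong to distinct Bell pairs (so their measurement outcomes are independent), and that outcomes across different tuples $l$ are independent as well. Both follow because the source prepares a fresh $\ket{\Phi^+}$ pair for each position in each tuple, but this should be stated explicitly since the whole proposition rests on it. A secondary point worth a sentence is that the identities are statements about \emph{expectations} of a binomial-type count, not almost-sure equalities — the deviation of the actual cardinalities from $d/2$ and $d/4$ is what the degree of accuracy $d$ controls, as promised in Section~\ref{sec: Discussion and Conclusions}.
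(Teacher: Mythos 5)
Your proof is correct and follows essentially the same route as the paper's: both reduce the claim to the facts that measuring a $\ket{\Phi^+}$ pair yields a uniform bit, that distinct Bell pairs give independent outcomes, and then count tuples via linearity of expectation. Your version is slightly more careful in making the independence across positions and tuples explicit and in noting that cryptic tuples contribute to neither set in the second identity, but the substance is identical.
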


\begin{proof}
	The proof is quite straightforward because it is based on the entanglement distribution scheme of Definition \ref{def: Entanglement Distribution Scheme}. The entanglement distribution scheme stipulates that each $n_{ i }$-tuple in the original qubit sequence of Alice$_{ i }$ shares one $\ket{ \Phi^{ + } } = \frac { \ket{ 0 }_{ A } \ket{ 0 }_{ k } + \ket{ 1 }_{ A } \ket{ 1 }_{ k } } { \sqrt{ 2 } }$ pair with every Bob$_{ k }^{ i }$, $k = 1, \dots, n_{ i }$. Therefore, there are $d$ in total bits $a_{ k, l }^{ i }$ that occupy the $k^{ th }$ position in every tuple $l$ of $\mathbf { a }^{ i }$, $1 \leq l \leq d$, which are equal to the corresponding bits $b_{ k, l }^{ i }$ of $\mathbf { b }_{ k }^{ i }$. This is captured by the next formula:
	\begin{align}
		a_{ k, l }^{ i } = b_{ k, l }^{ i }
		\ , \ 1 \leq l \leq d \ .
		\label{eq: Alice & Bob's k Equal Bits}
	\end{align}
	The remaining bits of $\mathbf { b }_{ k }^{ i }$ result from measuring qubits in the $\ket{ + }$ state. Hence, we expect half of them to end up $0$, and the remaining half to end up $1$. More importantly though, these bits are not correlated with the corresponding bits of $\mathbf { a }^{ i }$. Consequently, we can easily draw the following conclusions.
	\begin{itemize}
		\item	Measuring a pair of qubits in the $\ket{ \Phi^{ + } }$ state will result in both qubits collapsing in state $\ket{ 0 }$ with probability $0.5$, or in state $\ket{ 1 }$ with probability $0.5$. This implies that the expected number of the $a_{ k, l }^{ i }$ and $b_{ k, l }^{ i }$ bits with value $1 (0)$ is $\frac { d } { 2 }$. Consequently, the expected number of tuples in $\mathbf { a }^{ i }$ (and in $\mathbf { b }_{ k }^{ i }$) in which the bit in the $k^{ th }$ position has value $1 (0)$ is $\frac { d } { 2 }$. Thus, irrespective of whether the verification outcome $c_{ k }^{ i }$ is $1$ or $0$, the expected number of tuples in $\mathbf { p }_{ k }^{ i }$ in which the bit in the $k^{ th }$ position has value $c_{ k }^{ i }$ is $\frac { d } { 2 }$, which proves property \eqref{eq: Expected Number of Tuples with Specific Value in Position k}. This also means that the expected number of the remaining tuples in $\mathbf { p }_{ k }^{ i }$, which are cryptic tuples according to Definition \ref{def: Proof Sequences}, is also $\frac { d } { 2 }$.
		\item	Measuring two pairs of qubits, both in the $\ket{ \Phi^{ + } }$ state, will result in both qubits of the first pair collapsing in state $\ket{ 0 }$ with probability $0.5$, or in state $\ket{ 1 }$ with probability $0.5$, and, independently, both qubits of the second pair collapsing in state $\ket{ 0 }$ with probability $0.5$, or in state $\ket{ 1 }$ with probability $0.5$. This means that the expected number of the $a_{ k, l }^{ i }$ and $a_{ k^{ \prime }, l }^{ i }$ bits with values ``$00$'', ``$01$'', ``$10$'', and ``$11$'' is $\frac { d } { 4 }$. Consequently, the expected number of tuples in $\mathbf { a }^{ i }$ in which the bits in positions $k$ and $k^{ \prime }$ contain any one of the aforementioned combinations is $\frac { d } { 4 }$. Thus, irrespective of whether the verification outcome $c_{ k }^{ i }$ is $1$ or $0$, the expected number of tuples in $\mathbf { p }_{ k }^{ i }$ in which the bits in positions $k$ and $k^{ \prime }$ are $c_{ k }^{ i } c_{ k }^{ i }$ or $c_{ k }^{ i } \overline { c_{ k }^{ i } }$ is $\frac { d } { 4 }$, which proves property \eqref{eq: Expected Number of Tuples with Specific Values in Positions k & k'}.
	\end{itemize}
	This completes the proof of this proposition.
\end{proof}

The properties outlined in Proposition \ref{prp: Is Alice's Proof Consistent?} are instrumental in the design of the verification tests that are used as subroutines for the QNVA. These tests, which are performed by every news aggregator Bob$_{ k }^{ i }$, $1 \leq k \leq n_{ i }$, in order to assess whether or not the coordinator Alice$_{ i }$ and the other news aggregators are honest, are based on the verification outcome $c_{ k }^{ i }$ and the proof sequence $\mathbf { p }_{ k }^{ i }$ that Bob$_{ k }^{ i }$ has received from Alice$_{ i }$.

As we have emphasized, our algorithm can handle all three scenarios mentioned above. For the first scenario (\textbf{S}$_{ 1 }$), the verification test \textsc{IsAlice'sProofConsistent} contained in Figure \ref{fig: The IsAlice'sProofConsistent Algorithm} can decide whether or not $\mathbf { p }_{ k }^{ i }$ is consistent with $c_{ k }^{ i }$ by checking if it satisfies Proposition \ref{prp: Is Alice's Proof Consistent?}. It relies on the auxiliary test \textsc{IsProofBalanced} shown below. It is essential to point out that in a real implementation of these tests one must take into account the possible imperfections of the quantum channel, and the probabilistic outcome of the measurements. That means that the stringent equality requirement of the expected values as expressed in the propositions should be relaxed and one should instead check for approximate equality $\approx$ or approximate inequality $\napprox$. In the presentation of the pseudocode, we adopt the following conventions.

\begin{itemize}
	\item	$i$, $1 \leq i \leq m$, is the index of Alice$_{ i }$
	\item	$k$, $1 \leq k \leq n_{ i }$, is the index of Bob$_{ k }^{ i }$
	\item	$c_{ k }^{ i }$ is the verification outcome that Alice$_{ i }$ sends to Bob$_{ k }^{ i }$
	\item	$\mathbf { p }_{ k }^{ i }$ is the proof sequence that Alice$_{ i }$ sends to Bob$_{ k }^{ i }$
	\item	$\mathbf { b }_{ k }^{ i }$ is the classical bit sequence of Bob$_{ k }^{ i }$
\end{itemize}

\begin{tcolorbox}
	[
		grow to left by = 0.00 cm,
		grow to right by = 0.00 cm,
		colback = WordVeryLightTeal!50,			
		enhanced jigsaw,						
		sharp corners,
		toprule = 0.10 pt,
		bottomrule = 0.10 pt,
		leftrule = 0.1 pt,
		rightrule = 0.1 pt,
		sharp corners,
		center title,
		fonttitle = \bfseries
	]
	\begin{figure}[H]
		\centering
		\begin{minipage}[b]{1.00 \textwidth}
			\begin{algorithm}[H]
				\SetAlgorithmName{Auxiliary Test}{ }{ }
				\renewcommand{\thealgocf}{}                                                         
				\caption{ \textsc{IsProofBalanced}$( i, k, c_{ k }^{ i }, \mathbf { p }_{ k }^{ i } )$ }
				\label{alg: The IsProofBalanced Algorithm}
				\For {$r ( \neq k ) = 1$ \KwTo $n_{ i }$}
				{
					$N_{ 1 }
					=
					\lvert
					\
					P_{ c_{ k }^{ i }, c_{ k }^{ i } }
					( \mathbf { p }_{ k }^{ i }, k, r )
					\
					\rvert
					$
					\\
					$N_{ 2 }
					=
					\lvert
					\
					P_{ c_{ k }^{ i }, \overline { c_{ k }^{ i } } }
					( \mathbf { p }_{ k }^{ i }, k, r )
					\
					\rvert
					$
					\\
					\If { $( N_{ 1 } \neq \frac { d } { 4 } \ \mathbf{ OR } \ N_{ 2 } \neq \frac { d } { 4 } )$ }
					{
						\Return FALSE
					}
				}
				\Return TRUE
			\end{algorithm}
			\caption{This auxiliary algorithm is invoked by Bob$_{ k }^{ i }$ during the main verification tests to ascertain whether property \eqref{eq: Expected Number of Tuples with Specific Values in Positions k & k'} of Proposition \ref{prp: Is Alice's Proof Consistent?} holds.}
			\label{fig: The IsProofBalanced Algorithm}
		\end{minipage}
	\end{figure}
\end{tcolorbox}

\begin{tcolorbox}
	[
		grow to left by = 0.00 cm,
		grow to right by = 0.00 cm,
		colback = WordVeryLightTeal!50,			
		enhanced jigsaw,						
		sharp corners,
		toprule = 0.10 pt,
		bottomrule = 0.10 pt,
		leftrule = 0.1 pt,
		rightrule = 0.1 pt,
		sharp corners,
		center title,
		fonttitle = \bfseries
	]
	\begin{figure}[H]
		\centering
		\begin{minipage}[b]{1.00 \textwidth}
			\begin{algorithm}[H]
				\SetAlgorithmName{Verification Test}{ }{ }
				\setcounter{algocf}{0}
				\caption{ \textsc{IsAlice'sProofConsistent}$( i, k, c_{ k }^{ i }, \mathbf { p }_{ k }^{ i }, \mathbf { b }_{ k }^{ i } )$ }
				\label{alg: The IsAlice'sProofConsistent Algorithm}
				$N
				=
				\lvert
				\
				P_{ c_{ k }^{ i } }
				( \mathbf { p }_{ k }^{ i }, k )
				\
				\rvert
				$
				\\
				\If { $N \neq \frac { d } { 2 }$ }
				{
					\Return FALSE
				}
				\ForEach { $l \in P_{ c_{ k }^{ i } } ( \mathbf { p }_{ k }^{ i }, k )$ }
				{
					\If { $( p_{ k, l }^{ i } \neq b_{ k, l }^{ i } )$ }
					{
						\Return FALSE
					}
				}
				\Return \textsc{IsProofBalanced}$( i, k, c_{ k }^{ i }, \mathbf { p }_{ k }^{ i } )$
			\end{algorithm}
			\caption{Bob$_{ k }^{ i }$ uses the above algorithm to check if the proof sequence $\mathbf { p }_{ k }^{ i }$ is consistent with the verification outcome $c_{ k }^{ i }$.}
			\label{fig: The IsAlice'sProofConsistent Algorithm}
		\end{minipage}
	\end{figure}
\end{tcolorbox}

To cope with the second scenario (\textbf{S}$_{ 2 }$), the verification test \textsc{IsBob'sProofConsistent} contained in Figure \ref{fig: The IsBob'sProofConsistent Algorithm} can decide whether or not $\mathbf { p }_{ k }^{ i }$ is consistent with $c_{ k }^{ i }$ by virtue of the results proved in the next proposition.

\begin{proposition} \label{prp: Is Bob's Proof Consistent?} \
	Suppose that Bob$_{ k }^{ i }$, $1 \leq k \leq n_{ i }$, has received from Alice$_{ i }$, $1 \leq i \leq m$, the verification outcome $c_{ k }^{ i } = 1$ $( c_{ k }^{ i } = 0 )$, and the consistent proof sequence $\mathds{ 1 }_{ k }^{ i }$ $( \vmathbb{ 0 }_{ k }^{ i } )$.

	Let us further assume that Bob$_{ k }^{ i }$ has also received from Bob$_{ k^{ \prime } }^{ i }$ ($k^{ \prime } \neq k$) the opposite verification outcome $c_{ k^{ \prime } }^{ i } = 0$ $( c_{ k^{ \prime } }^{ i } = 1 )$ and the sequence $\vmathbb{ 0 }_{ k^{ \prime } }^{ i }$ $( \mathds{ 1 }_{ k^{ \prime } }^{ i } )$ as proof. If $\vmathbb{ 0 }_{ k^{ \prime } }^{ i }$ $( \mathds{ 1 }_{ k^{ \prime } }^{ i } )$ is consistent with $0$ $( 1 )$, then, in addition to the properties listed in Proposition \ref{prp: Is Alice's Proof Consistent?}, it must also satisfy the following property.
	\begin{align}
		\left\{
		\
		\begin{matrix*}[l]
			P_{ 1, 0 }
			( \mathds{ 1 }_{ k }^{ i }, k, k^{ \prime } )
			=
			P_{ 1, 0 }
			( \vmathbb{ 0 }_{ k^{ \prime } }^{ i }, k, k^{ \prime } )
			\\
			\\
			P_{ 0, 1 }
			( \vmathbb{ 0 }_{ k }^{ i }, k, k^{ \prime } )
			=
			P_{ 0, 1 }
			( \mathds{ 1 }_{ k^{ \prime } }^{ i }, k, k^{ \prime } )
		\end{matrix*}
		\
		\right\}
		\label{eq: The Explicit Equality of Tuples with Opposite Values in Positions k & k'}
	\end{align}
\end{proposition}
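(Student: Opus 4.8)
The plan is to prove \eqref{eq: The Explicit Equality of Tuples with Opposite Values in Positions k & k'} by a purely structural, deterministic argument; unlike Proposition \ref{prp: Is Alice's Proof Consistent?}, this statement requires no probabilistic input at all. The crucial observation is that a proof sequence that is \emph{consistent} with an outcome is, by Definition \ref{def: Proof Sequences}, assembled tuple-by-tuple from Alice$_i$'s \emph{actual} measured sequence $\mathbf { a }^{ i }$: every tuple of the proof is either an exact copy of the corresponding tuple $\mathbf { a }_{ l }$ of $\mathbf { a }^{ i }$, or the cryptic tuple $\mathbf { s }_{ \ast }$. Hence, under the hypotheses of the proposition, both $\mathds{ 1 }_{ k }^{ i }$ and $\vmathbb{ 0 }_{ k^{ \prime } }^{ i }$ are windows onto one and the same $\mathbf { a }^{ i }$, differing only in which tuples are shown and which are masked. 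Since the symbol $\ast$ is distinct from $0$ and $1$, a cryptic tuple can never contribute an index to a set $P_{ x, y }( \cdot, k, k^{ \prime } )$ with $x, y \in \mathbb { B }$, so every index occurring in such a set points to a tuple that is a faithful copy of some $\mathbf { a }_{ l }$.

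Granting this, I would first unwind the left-hand side of the first equality. An index $l$ belongs to $P_{ 1, 0 }( \mathds{ 1 }_{ k }^{ i }, k, k^{ \prime } )$ exactly when entry $k$ of tuple $l$ of $\mathds{ 1 }_{ k }^{ i }$ is $1$ and entry $k^{ \prime }$ is $0$. By \eqref{eq: Bob's k Proof Sequence for True}, entry $k$ being $1$ forces tuple $l$ to be non-cryptic, which happens precisely when $a_{ k, l }^{ i } = 1$; in that case tuple $l$ equals $\mathbf { a }_{ l }$, so entry $k^{ \prime }$ is $a_{ k^{ \prime }, l }^{ i }$, and the second requirement reads $a_{ k^{ \prime }, l }^{ i } = 0$. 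Therefore $P_{ 1, 0 }( \mathds{ 1 }_{ k }^{ i }, k, k^{ \prime } ) = \{\, l \ \vert \ a_{ k, l }^{ i } = 1 \wedge a_{ k^{ \prime }, l }^{ i } = 0 \,\}$. Running the identical argument on $\vmathbb{ 0 }_{ k^{ \prime } }^{ i }$ via \eqref{eq: Bob's k Proof Sequence for Fake}: entry $k^{ \prime }$ of tuple $l$ being $0$ forces $a_{ k^{ \prime }, l }^{ i } = 0$, whence tuple $l$ equals $\mathbf { a }_{ l }$, whence entry $k$ equals $a_{ k, l }^{ i }$, and requiring it to be $1$ yields $P_{ 1, 0 }( \vmathbb{ 0 }_{ k^{ \prime } }^{ i }, k, k^{ \prime } ) = \{\, l \ \vert \ a_{ k^{ \prime }, l }^{ i } = 0 \wedge a_{ k, l }^{ i } = 1 \,\}$. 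The two sets coincide, which is the first line of \eqref{eq: The Explicit Equality of Tuples with Opposite Values in Positions k & k'}; the second line follows by the mirror-image computation with the roles of $0$ and $1$ interchanged, so that both $P_{ 0, 1 }( \vmathbb{ 0 }_{ k }^{ i }, k, k^{ \prime } )$ and $P_{ 0, 1 }( \mathds{ 1 }_{ k^{ \prime } }^{ i }, k, k^{ \prime } )$ reduce to $\{\, l \ \vert \ a_{ k, l }^{ i } = 0 \wedge a_{ k^{ \prime }, l }^{ i } = 1 \,\}$. The remark that the properties of Proposition \ref{prp: Is Alice's Proof Consistent?} also hold is immediate, since it applies verbatim to each of the two individually consistent proof sequences.

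I do not expect a genuine obstacle here: the statement is a combinatorial identity between explicitly describable sets of tuple indices. The one point demanding care is the logical bookkeeping — for each of the four sets one must correctly identify which of the two positions ($k$ or $k^{ \prime }$) is the one that governs whether a given tuple of \emph{that particular} proof sequence is cryptic or revealed, and only then read the other coordinate off $\mathbf { a }_{ l }$. It is also worth emphasizing in the write-up why the \emph{joint} consistency hypothesis is indispensable: it is exactly what forces the two proof sequences to be carved out of the \emph{same} $\mathbf { a }^{ i }$, and this is the whole point of the proposition — in scenario \textbf{S}$_2$ a dishonest Bob$_{ k^{ \prime } }^{ i }$ who manufactures an ``opposite'' proof out of anything other than the genuine $\mathbf { a }^{ i }$ will, in general, violate \eqref{eq: The Explicit Equality of Tuples with Opposite Values in Positions k & k'}, and Bob$_{ k }^{ i }$ can test for exactly this.
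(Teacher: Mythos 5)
Your proposal is correct and follows essentially the same route as the paper's own proof: both arguments rest on the observation that two jointly consistent proof sequences are deterministic selections from the same measured sequence $\mathbf{a}^{i}$, so each of the sets in \eqref{eq: The Explicit Equality of Tuples with Opposite Values in Positions k & k'} reduces to $\{\, l \ \vert \ a_{k,l}^{i} = 1 \wedge a_{k^{\prime},l}^{i} = 0 \,\}$ (respectively its mirror image). Your write-up is in fact slightly more careful than the paper's, since you explicitly note that cryptic tuples cannot contribute indices to $P_{x,y}$ with $x, y \in \mathbb{B}$ and you identify, for each proof sequence, which coordinate governs whether a tuple is revealed.
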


\begin{proof}
	Without loss of generality we consider the situation that has evolved as follows.
	\begin{itemize}
		\item	Initially, Bob$_{ k }^{ i }$ received from Alice$_{ i }$ the verification outcome $c_{ k }^{ i } = 1$ and the consistent proof sequence $\mathds{ 1 }_{ k }^{ i }$, and
		\item	subsequently, Bob$_{ k }^{ i }$ received from Bob$_{ k^{ \prime } }^{ i }$ the opposite verification outcome $c_{ k^{ \prime } }^{ i } = 0$ and the sequence $\vmathbb{ 0 }_{ k^{ \prime } }^{ i }$ as proof.
	\end{itemize}
	We shall prove that if $\vmathbb{ 0 }_{ k^{ \prime } }^{ i }$ is consistent with the outcome $0$, then, in addition to the properties listed in Proposition \ref{prp: Is Alice's Proof Consistent?}, it must also satisfy the properties outlined above.

	The proof is an immediate consequence of the manner proof sequences are constructed. If we recall Definition \ref{def: Proof Sequences}, we see that the proof sequence $\mathds{ 1 }_{ k }^{ i }$ $( \vmathbb{ 0 }_{ k }^{ i } )$, which is consistent with the verification outcome $c_{ k }^{ i } = 1$ $( c_{ k }^{ i } = 0 )$, contains all the $n_{ i }$-tuples of Alice$_{ i }$'s bit sequence $\mathbf { a }^{ i }$ in which the bit in the $k^{ th }$ position has value $1$ $( 0 )$, including all those in which the bit in position $k^{ \prime }$ has the value $1$, and all those in which the bit in position $k^{ \prime }$ has the value $0$.
	\begin{align}
		\mathbf { a }^{ i }
		&=
		\underbrace { \colorbox {WordAquaLighter40} { $a_{ n_{ i }, d }^{ i } \dots a_{ k, d }^{ i } \dots a_{ 1, d }^{ i }$ } }_{ \text{ tuple } d }
		\cdots
		\underbrace { \colorbox {WordAquaLighter60} { $a_{ n_{ i }, 2 }^{ i } \dots a_{ k, 2 }^{ i } \dots a_{ 1, 2 }^{ i }$ } }_{ \text{ tuple } 2 }
		\underbrace { \colorbox {WordAquaLighter80} { $a_{ n_{ i }, 1 }^{ i } \dots a_{ k, 1 }^{ i } \dots a_{ 1, 1 }^{ i }$ } }_{ \text{ tuple } 1 }
		\ , \text{ and }
		\label{eq: Alice's Classical Bit Se quence}
		\\
		\mathbf { b }_{ k }^{ i }
		&=
		\underbrace { \colorbox {WordLightGreen} { $b_{ n_{ i }, d }^{ i } \dots b_{ k, d }^{ i } \dots b_{ 1, d }^{ i }$ } }_{ \text{ tuple } d }
		\cdots
		\underbrace { \colorbox {WordLightGreen!50} { $b_{ n_{ i }, 2 }^{ i } \dots b_{ k, 2 }^{ i } \dots b_{ 1, 2 }^{ i }$ } }_{ \text{ tuple } 2 } \
		\underbrace { \colorbox {WordLightGreen!25} { $b_{ n_{ i }, 1 }^{ i } \dots b_{ k, 1 }^{ i } \dots b_{ 1, 1 }^{ i }$ } }_{ \text{ tuple } 1 }
		\label{eq: Bob's Classical Bit Se quence}
		\ ,
	\end{align}
	Symmetrically, if the proof sequence $\vmathbb{ 0 }_{ k^{ \prime } }^{ i }$  $( \mathds{ 1 }_{ k^{ \prime } }^{ i } )$ is consistent with the opposite verification outcome $c_{ k^{ \prime } }^{ i } = 0$ $( c_{ k^{ \prime } }^{ i } = 1 )$, then it must contain all the $n_{ i }$-tuples of $\mathbf { a }^{ i }$ in which the bit in position $k^{ \prime }$ has value $0$ $( 1 )$, including all those in which the bit in position $k$ has the value $0$, and all those in which the bit in position $k$ has the opposite value $1$.

	Therefore, if both proof sequences $\mathds{ 1 }_{ k }^{ i }$ and $\vmathbb{ 0 }_{ k^{ \prime } }^{ i }$ ($\vmathbb{ 0 }_{ k }^{ i }$ and $\mathds{ 1 }_{ k^{ \prime } }^{ i }$) are consistent with the verification checks $c_{ k }^{ i } = 1$ and $c_{ k^{ \prime } }^{ i } = 0$ ($c_{ k }^{ i } = 0$ and $c_{ k^{ \prime } }^{ i } = 1$), respectively, then they must contain all the $n_{ i }$-tuples of $\mathbf { a }^{ i }$ in which the bit in the $k^{ th }$ position has the value $1$ ($0$) and the bit in position $k^{ \prime }$ has the opposite value $0$ ($1$). Formally, we can express this fact as
	\begin{align}
		\left\{
		\
		\begin{matrix*}[l]
			P_{ 1, 0 }
			( \mathds{ 1 }_{ k }^{ i }, k, k^{ \prime } )
			=
			P_{ 1, 0 }
			( \vmathbb{ 0 }_{ k^{ \prime } }^{ i }, k, k^{ \prime } )
			\\
			\\
			P_{ 0, 1 }
			( \vmathbb{ 0 }_{ k }^{ i }, k, k^{ \prime } )
			=
			P_{ 0, 1 }
			( \mathds{ 1 }_{ k^{ \prime } }^{ i }, k, k^{ \prime } )
		\end{matrix*}
		\
		\right\}
		\ ,
		\label{eq: Explicit Consistent Proofs Contain the Same Tuples with Opposite Values in Positions k & k'}
	\end{align}
	which concludes this proof.
	The opposite case is completely symmetrical and will be omitted.
\end{proof}

The previous proposition can be cast into its most general form as the following Corollary.

\begin{corollary} \label{crl: The Fundamental Property of Two Consistent Proof Sequences} \
	Let us assume that Bob$_{ k }^{ i }$, $1 \leq k \leq n_{ i }$, has received
	\begin{itemize}
		\item[$\Diamond$]	from Alice$_{ i }$, $1 \leq i \leq m$, the verification outcome $c_{ k }^{ i }$ and the sequence $\mathbf { p }_{ k }^{ i }$ as proof, and
		\item[$\Diamond$]	from Bob$_{ k^{ \prime } }^{ i }$ ($k^{ \prime } \neq k$) the opposite verification outcome $c_{ k^{ \prime } }^{ i } = \overline { c_{ k }^{ i } }$ and the sequence $\mathbf { p }_{ k^{ \prime } }^{ i }$ as proof.
	\end{itemize}
	Then, if both $\mathbf { p }_{ k^{ \prime } }^{ i }$ and $\mathbf { p }_{ k^{ \prime } }^{ i }$ are consistent with $c_{ k }^{ i }$ and $\overline { c_{ k }^{ i } }$, respectively, they must also satisfy the following property.
	\begin{align}
		P_{ c_{ k }^{ i }, \overline { c_{ k }^{ i } } }
		( \mathbf { p }_{ k }^{ i }, k, k^{ \prime } )
		=
		P_{ c_{ k }^{ i }, \overline { c_{ k }^{ i } } }
		( \mathbf { p }_{ k^{ \prime } }^{ i }, k, k^{ \prime } )
		\label{eq: The Equality of Tuples with Opposite Values in Positions k & k'}
	\end{align}
\end{corollary}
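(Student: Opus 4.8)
The plan is to obtain the corollary directly from Proposition~\ref{prp: Is Bob's Proof Consistent?} by a short case analysis on the value of the verification outcome $c_k^i \in \mathbb{B}$, folding the two component equalities of~\eqref{eq: The Explicit Equality of Tuples with Opposite Values in Positions k & k'} into the single unified statement~\eqref{eq: The Equality of Tuples with Opposite Values in Positions k & k'}. No new quantum reasoning is needed; the content is entirely a matter of matching the generic notation of the corollary to the concrete notation of the proposition.

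First I would use Definition~\ref{def: Proof Sequences} to pin down the shape of the two proof sequences once $c_k^i$ is fixed. If $c_k^i = 1$, then $\mathbf{p}_k^i = \mathds{1}_k^i$, and since by hypothesis $c_{k'}^i = \overline{c_k^i} = 0$ we have $\mathbf{p}_{k'}^i = \vmathbb{0}_{k'}^i$; here $(c_k^i, \overline{c_k^i}) = (1,0)$, so the claimed identity~\eqref{eq: The Equality of Tuples with Opposite Values in Positions k & k'} reads $P_{1,0}(\mathds{1}_k^i, k, k') = P_{1,0}(\vmathbb{0}_{k'}^i, k, k')$, which is exactly the first line of~\eqref{eq: The Explicit Equality of Tuples with Opposite Values in Positions k & k'} proved in Proposition~\ref{prp: Is Bob's Proof Consistent?}. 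Symmetrically, if $c_k^i = 0$, then $\mathbf{p}_k^i = \vmathbb{0}_k^i$, $\mathbf{p}_{k'}^i = \mathds{1}_{k'}^i$, and $(c_k^i, \overline{c_k^i}) = (0,1)$, so~\eqref{eq: The Equality of Tuples with Opposite Values in Positions k & k'} becomes $P_{0,1}(\vmathbb{0}_k^i, k, k') = P_{0,1}(\mathds{1}_{k'}^i, k, k')$, the second line of the same display. As $c_k^i$ has no other possible value, the corollary follows in both cases.

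For completeness I would briefly recall why these are genuine set equalities and not merely equalities in expectation: a consistent proof sequence contains, verbatim and in their original positions, all and only those $n_i$-tuples $\mathbf{a}_l$ of Alice$_i$'s bit string $\mathbf{a}^i$ for which $a_{k,l}^i$ equals the advertised outcome, with the remaining slots occupied by the cryptic tuple $\mathbf{s}_\ast$. Hence $P_{c_k^i,\overline{c_k^i}}(\mathbf{p}_k^i,k,k') = \{\, l : a_{k,l}^i = c_k^i \wedge a_{k',l}^i = \overline{c_k^i}\,\}$, and the same description holds for $\mathbf{p}_{k'}^i$, giving identical index sets deterministically. The only mild subtlety — and therefore the one step worth stating carefully — is the bookkeeping dictionary between the generic objects $\mathbf{p}_k^i, \mathbf{p}_{k'}^i$ and pattern $P_{c_k^i,\overline{c_k^i}}$ of the corollary and the concrete objects $\mathds{1}, \vmathbb{0}$ and patterns $P_{1,0}, P_{0,1}$ of the proposition; once that is made explicit, there is nothing further to prove.
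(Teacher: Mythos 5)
Your proof is correct and rests on the same observation as the paper's: by Definition~\ref{def: Proof Sequences}, a consistent proof sequence reveals verbatim exactly those tuples of $\mathbf{a}^{i}$ whose designated bit equals the advertised outcome, so both proofs contain precisely the tuples with $c_{k}^{i}$ in position $k$ and $\overline{c_{k}^{i}}$ in position $k^{\prime}$. The only cosmetic difference is that you obtain the corollary by instantiating the two cases $c_{k}^{i}=1$ and $c_{k}^{i}=0$ of Proposition~\ref{prp: Is Bob's Proof Consistent?}, whereas the paper re-runs the same definitional argument directly in the generic notation; both are fine.
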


\begin{proof}
	If we recall Definition \ref{def: Proof Sequences} again, we see that if the proof sequence $\mathbf { p }_{ k }^{ i }$ is consistent with $c_{ k }^{ i }$, then it contains all the $n_{ i }$-tuples of Alice$_{ i }$'s bit sequence $\mathbf { a }^{ i }$ in which the bit in the $k^{ th }$ position has value $c_{ k }^{ i }$, including all those in which the bit in position $k^{ \prime }$ has the same value $c_{ k }^{ i }$, and all those in which the bit in position $k^{ \prime }$ has the opposite value $\overline { c_{ k }^{ i } }$. Symmetrically, if the proof sequence $\mathbf { p }_{ k^{ \prime } }^{ i }$ is consistent with $\overline { c_{ k }^{ i } }$, then it contains all the $n_{ i }$-tuples of $\mathbf { a }^{ i }$ in which the bit in position $k^{ \prime }$ has value $\overline { c_{ k }^{ i } }$, including all those in which the bit in position $k$ has the same value $\overline { c_{ k }^{ i } }$, and all those in which the bit in position $k$ has the opposite value $c_{ k }^{ i }$. Therefore, if they are consistent, both proof sequences $\mathbf { p }_{ k }^{ i }$ and $\mathbf { p }_{ k^{ \prime } }^{ i }$ contain all the $n_{ i }$-tuples of $\mathbf { a }^{ i }$ in which the bit in the $k^{ th }$ position has value $c_{ k }^{ i }$ and the bit in position $k^{ \prime }$ has the opposite value $\overline { c_{ k }^{ i } }$. This is simply written as
	\begin{align}
		P_{ c_{ k }^{ i }, \overline { c_{ k }^{ i } } }
		( \mathbf { p }_{ k }^{ i }, k, k^{ \prime } )
		=
		P_{ c_{ k }^{ i }, \overline { c_{ k }^{ i } } }
		( \mathbf { p }_{ k^{ \prime } }^{ i }, k, k^{ \prime } )
		\ ,
		\label{eq: Consistent Proofs Conatian the Same Tuples with Opposite Values in Positions k & k'}
	\end{align}
	which completes the proof of this corollary.
\end{proof}

To sum it up, the property expressed by relation \eqref{eq: The Equality of Tuples with Opposite Values in Positions k & k'} asserts that if two proof sequences $\mathbf { p }_{ k }^{ i }$ and $\mathbf { p }_{ k^{ \prime } }^{ i }$ that correspond to opposite outcomes $c_{ k }^{ i }$ and $\overline { c_{ k }^{ i } }$ are both consistent, then they must contain precisely the same tuples of $\mathbf { a }^{ i }$ in which the bit in the $k^{ th }$ position is $c_{ k }^{ i }$ and the bit in position $k^{ \prime }$ is $\overline { c_{ k }^{ i } }$. This property can be employed by Bob$_{ k }^{ i }$ to detect if Bob$_{ k^{ \prime } }^{ i }$ deliberately spreads misinformation, as formalized by the next Theorem \ref{thr: Does Bob Deliberately Spread Disinformation?}.

\begin{theorem} [Malicious news aggregator detection] \label{thr: Does Bob Deliberately Spread Disinformation?} \
	Suppose that Bob$_{ k }^{ i }$, $1 \leq k \leq n_{ i }$, has received from Alice$_{ i }$, $1 \leq i \leq m$, the verification outcome $c_{ k }^{ i }$, and the consistent proof sequence $\mathbf { p }_{ k }^{ i }$.

	Any attempt by another news aggregator Bob$_{ k^{ \prime } }^{ i }$ ($k^{ \prime } \neq k$) to falsely claim that he received $\overline { c_{ k }^{ i } }$ from Alice$_{ i }$, despite the fact that in reality he received $c_{ k }^{ i }$, and forge a proof sequence $\mathbf { p }_{ k^{ \prime } }^{ i }$ consistent with $\overline { c_{ k }^{ i } }$ will be detected by Bob$_{ k }^{ i }$.
\end{theorem}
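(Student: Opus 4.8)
The plan is to reduce the statement to the set identity of Corollary~\ref{crl: The Fundamental Property of Two Consistent Proof Sequences} and then argue that a dishonest Bob$_{k'}^i$ cannot reproduce that identity, because it is pinned down by bits of Alice$_i$'s string $\mathbf{a}^i$ to which he has no access. First I would record Bob$_k^i$'s data: the outcome $c_k^i$, his genuinely consistent proof $\mathbf{p}_k^i$, and his measured string $\mathbf{b}_k^i$; from $\mathbf{p}_k^i$ alone he can evaluate the left-hand side of \eqref{eq: The Equality of Tuples with Opposite Values in Positions k & k'}, which by Definition~\ref{def: Proof Sequences} equals $\{\, l : a_{k,l}^i = c_k^i \ \wedge\ a_{k',l}^i = \overline{c_k^i} \,\}$, the set of tuples of $\mathbf{a}^i$ whose $k$-th bit is $c_k^i$ and whose $k'$-th bit is $\overline{c_k^i}$. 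When Bob$_{k'}^i$ forwards the outcome $\overline{c_k^i}$ together with a proof $\mathbf{p}_{k'}^i$, Bob$_k^i$ runs \textsc{IsBob'sProofConsistent}: it first checks the statistical constraints of Proposition~\ref{prp: Is Alice's Proof Consistent?} (the $d/2$ and $d/4$ expectations, via \textsc{IsProofBalanced}), and then checks that \eqref{eq: The Equality of Tuples with Opposite Values in Positions k & k'} holds against his own $\mathbf{p}_k^i$. By Corollary~\ref{crl: The Fundamental Property of Two Consistent Proof Sequences} a genuinely consistent $\mathbf{p}_{k'}^i$ always passes, so it remains to show that a fabricated one is rejected with overwhelming probability.

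Next I would delimit Bob$_{k'}^i$'s knowledge. Since he truly received $c_k^i$ from Alice$_i$ together with the genuine proof consistent with $c_k^i$, Proposition~\ref{prp: Is Alice's Proof Consistent?} gives $a_{k',l}^i = b_{k',l}^i$ for every $l$, so he knows the index set $L := \{\, l : a_{k',l}^i = \overline{c_k^i} \,\}$, with $\mathbb{E}[\,\lvert L\rvert\,] = d/2$. However, on the tuples $l \in L$ his genuine proof carries a cryptic tuple $\mathbf{s}_\ast$ rather than $\mathbf{a}_l$, and — this is the crux — for those $l$ the qubit pair at position $k$ of tuple $l$ is the $\ket{\Phi^+}$ pair shared by Alice$_i$ and Bob$_k^i$, which is independent of the $\ket{+}$ qubit that Bob$_{k'}^i$ holds in that slot. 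Hence each $a_{k,l}^i$ with $l \in L$ is, from Bob$_{k'}^i$'s viewpoint, an unbiased bit about which neither his qubits nor the messages legitimately available to him reveal anything.

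Finally I would close the argument. Any forged $\mathbf{p}_{k'}^i$ that survives \textsc{IsProofBalanced} and the position count must at least statistically look like a proof for $\overline{c_k^i}$, hence its non-cryptic tuples sit precisely at the positions $L$ and carry $\overline{c_k^i}$ in position $k'$; its contribution to $P_{c_k^i, \overline{c_k^i}}(\mathbf{p}_{k'}^i, k, k')$ is therefore $\{\, l \in L : (\mathbf{p}_{k'}^i)_{k,l} = c_k^i \,\}$. To match Bob$_k^i$'s set, Bob$_{k'}^i$ must set $(\mathbf{p}_{k'}^i)_{k,l} = c_k^i$ exactly for those $l \in L$ with $a_{k,l}^i = c_k^i$; since $\{a_{k,l}^i\}_{l \in L}$ are independent fair coins hidden from him, he meets all these constraints only with probability $2^{-\lvert L\rvert}$, and $\lvert L\rvert$ concentrates around $d/2$. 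Consequently \textsc{IsBob'sProofConsistent} exposes Bob$_{k'}^i$ except with probability $\approx 2^{-d/2}$, which is driven to $0$ by choosing the degree of accuracy $d$ large enough, and that is the precise meaning of ``will be detected''. The step I expect to be the main obstacle is making the ``no information'' claim rigorous: one must show that conditioning on $c_k^i$, on the genuine proof Alice$_i$ sent Bob$_{k'}^i$, on $\mathbf{b}_{k'}^i$, and on any admissible traffic leaves each $a_{k,l}^i$ ($l\in L$) uniform and independent — in particular one needs that Bob$_{k'}^i$ commits to his false claim and forged proof before, or independently of, receiving Bob$_k^i$'s genuine proof $\mathbf{p}_k^i$, since the latter would leak the set $\{\, l : a_{k,l}^i = c_k^i \,\}$; one also has to convert the informal ``detected'' into a $1-2^{-\Theta(d)}$ guarantee tied to the value of $d$ suggested in Section~\ref{sec: Discussion and Conclusions}.
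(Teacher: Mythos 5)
Your proposal is correct and follows essentially the same route as the paper's proof: reduce detection to the set identity of Corollary~\ref{crl: The Fundamental Property of Two Consistent Proof Sequences}, observe that Bob$_{ k^{ \prime } }^{ i }$ can locate the $\approx d/2$ tuples with $\overline { c_{ k }^{ i } }$ in position $k^{ \prime }$ but has no information about the bit in position $k$ of those tuples, and bound the probability of forging the matching set. The only substantive difference is the final count: you get $2^{ - \lvert L \rvert } \approx 2^{ - d / 2 }$ by treating the hidden bits $a_{ k, l }^{ i }$, $l \in L$, as independent fair coins, whereas the paper uses $1 / \binom{ d / 2 }{ d / 4 }$ (the adversary restricted to configurations with the expected number $d/4$ of correct placements); these differ only by a polynomial factor in $d$ and both vanish exponentially, so the conclusion is unaffected. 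Your closing caveat is well taken and goes beyond the paper: since in the QNVA the aggregators exchange $c_{ k }^{ i }, \mathbf{ p }_{ k }^{ i }$ with one another, a Bob$_{ k^{ \prime } }^{ i }$ who waits to receive $\mathbf{ p }_{ k }^{ i }$ before answering learns exactly the set $\{ l : a_{ k, l }^{ i } = c_{ k }^{ i } \}$ and could then forge perfectly, so the argument implicitly requires simultaneous exchange or prior commitment --- a point the paper's proof leaves unstated.
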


\begin{proof}
	The present situation concerns how a malicious Bob$_{ k^{ \prime } }^{ i }$ may try to deceive Bob$_{ k }^{ i }$ ($k^{ \prime } \neq k$). Bob$_{ k^{ \prime } }^{ i }$ has received from Alice$_{ i }$ the verification outcome $c_{ k }^{ i }$ together with a proof sequence $\mathbf { p }_{ k^{ \prime } }^{ i }$ consistent with $c_{ k }^{ i }$. Nevertheless, Bob$_{ k^{ \prime } }^{ i }$ intends to falsely claim that he has received $\overline { c_{ k }^{ i } }$. The question is: can Bob$_{ k^{ \prime } }^{ i }$ construct a convincing proof sequence $\mathbf { p }_{ k^{ \prime } }^{ i }$ consistent with $\overline { c_{ k }^{ i } }$. We proceed to show that this is probabilistically impossible.

	Having received and validated $\mathbf { p }_{ k }^{ i }$, Bob$_{ k }^{ i }$ knows the set $P_{ c_{ k }^{ i }, \overline { c_{ k }^{ i } } } ( \mathbf { p }_{ k }^{ i }, k, k^{ \prime } )$ of the positions of all the $n_{ i }$-tuples of $\mathbf { a }^{ i }$ that contain $c_{ k }^{ i }$ and $\overline { c_{ k }^{ i } }$ in positions $k$ and $k^{ \prime }$ respectively.

	On the other hand, Bob$_{ k^{ \prime } }^{ i }$ has received the proof sequence $\mathbf { p }_{ k^{ \prime } }^{ i }$ that is also consistent with $c_{ k }^{ i }$. Accordingly, Bob$_{ k^{ \prime } }^{ i }$ knows the following two facts.
	\begin{enumerate}  [ left = 0.85 cm, labelsep = 0.50 cm, start = 1 ]
		\renewcommand\labelenumi{(\textbf{Fact}$_{ \theenumi }$)}
		\item	The indices of the $n_{ i }$-tuples of $\mathbf { a }^{ i }$ that contain $c_{ k }^{ i }$ in position $k^{ \prime }$, which includes those that also contain $c_{ k }^{ i }$ in position $k$, and those that also contain $\overline { c_{ k }^{ i } }$ in position $k$, i.e., the set
		\begin{align}
			P_{ c_{ k }^{ i } }
			( \mathbf { p }_{ k^{ \prime } }^{ i }, k^{ \prime } )
			=
			P_{ c_{ k }^{ i }, c_{ k }^{ i } }
			( \mathbf { p }_{ k^{ \prime } }^{ i }, k^{ \prime }, k )
			\cup
			P_{ c_{ k }^{ i }, \overline { c_{ k }^{ i } } }
			( \mathbf { p }_{ k^{ \prime } }^{ i }, k^{ \prime }, k )
			\ .
		\end{align}
		\item	The indices of the cryptic tuples $\mathbf { s }_{ \ast }$ of $\mathbf { a }^{ i }$, i.e., the set $P_{ \ast } ( \mathbf { p }_{ k^{ \prime } }^{ i } )$.
	\end{enumerate}
	By knowing the indices of the cryptic tuples, Bob$_{ k^{ \prime } }^{ i }$ is able to infer with certainty, i.e., probability $1$, that these indices correspond to $n_{ i }$-tuples of $\mathbf { a }^{ i }$ that contain $\overline { c_{ k }^{ i } }$ in position $k^{ \prime }$. In his effort to forge a proof sequence consistent with $\overline { c_{ k }^{ i } }$, Bob$_{ k^{ \prime } }^{ i }$ will correctly place all the tuples of $\mathbf { a }^{ i }$ that contain $\overline { c_{ k }^{ i } }$ in position $k^{ \prime }$. According to Proposition \ref{prp: Is Alice's Proof Consistent?}, their expected number is $\frac { d } { 2 }$, so in reality they would be $\approx \frac { d } { 2 }$. So, Bob$_{ k^{ \prime } }^{ i }$ will avoid trivial mistakes, such as
	\begin{itemize}
		\item	including a tuple where the bit in the $k^{ \prime }$ position has the wrong value, or
		\item	using fewer than expected tuples with $\overline { c_{ k }^{ i } }$ in position $k^{ \prime }$.
	\end{itemize}
	Bob$_{ k^{ \prime } }^{ i }$'s real weakness stems from the fact that the tuples he must include in his forged proof sequence may contain either $c_{ k }^{ i }$ with probability $0.5$, or $\overline { c_{ k }^{ i } }$ with equal probability $0.5$ in the $k^{ th }$ position because Bob$_{ k^{ \prime } }^{ i }$ doesn't know with certainty, even for a single tuple, if it contains $c_{ k }^{ i }$ or $\overline { c_{ k }^{ i } }$ in position $k$. Therefore, when forging a proof sequence consistent with $\overline { c_{ k }^{ i } }$, Bob$_{ k^{ \prime } }^{ i }$ has to guess for every tuple whether to place $c_{ k }^{ i }$ or $\overline { c_{ k }^{ i } }$ in position $k$. Thus, he is prone to make two types of mistakes.
	\begin{enumerate} [ left = 0.50 cm, labelsep = 0.50 cm, start = 1 ]
		\renewcommand\labelenumi{(\textbf{M}$_\theenumi$)}
		\item	Place $c_{ k }^{ i }$ in the $k^{ th }$ position of a wrong $n_{ i }$-tuple not contained in $P_{ c_{ k }^{ i }, \overline { c_{ k }^{ i } } } ( \mathbf { p }_{ k }^{ i }, k, k^{ \prime } )$.
		\item	Place $\overline { c_{ k }^{ i } }$ in the $k^{ th }$ position of a wrong $n_{ i }$-tuple that does appear in $P_{ c_{ k }^{ i }, \overline { c_{ k }^{ i } } } ( \mathbf { p }_{ k }^{ i }, k, k^{ \prime } )$.
	\end{enumerate}

	In other words, the question now becomes: how probable is for Bob$_{ k^{ \prime } }^{ i }$ to construct a proof sequence $\mathbf { p }_{ k^{ \prime } }^{ i }$ so that the set $P_{ c_{ k }^{ i }, \overline { c_{ k }^{ i } } } ( \mathbf { p }_{ k^{ \prime } }^{ i }, k, k^{ \prime } )$ is equal to the set $P_{ c_{ k }^{ i }, \overline { c_{ k }^{ i } } } ( \mathbf { p }_{ k }^{ i }, k, k^{ \prime } )$?

	The probability that Bob$_{ k^{ \prime } }^{ i }$ succeeds in doing so equals the probability of picking the one correct configuration out of many. The total number of configurations is equal to the number of ways to place $\approx \frac { d } { 4 }$ identical objects into $\approx \frac { d } { 2 }$ distinguishable boxes. Hence, the probability that Bob$_{ k^{ \prime } }^{ i }$ places \emph{all} the $\approx \frac { d } { 4 }$ values $c_{ k }^{ i }$ correctly in the $\approx \frac { d } { 2 }$ cryptic $n_{ i }$-tuples is
	\begin{align}
		P
		\left(
		\text{Bob$_{ k^{ \prime } }^{ i }$ places all $c_{ k }^{ i }$ correctly}
		\right)
		\approx
		\frac { 1 }
		{ \binom { \ d / 2 \ } { \ d / 4 \ } }
		\ ,
		\label{eq: Probability Bob k' Deceives Bob k}
	\end{align}
	which is practically zero for appropriately chosen values of $d$. Thus, the end result will violate property \eqref{eq: The Equality of Tuples with Opposite Values in Positions k & k'} of Corollary \ref{crl: The Fundamental Property of Two Consistent Proof Sequences}.

	Ergo, when Bob$_{ k }^{ i }$ checks the consistency of the proof sequence sent by Bob$_{ k^{ \prime } }^{ i }$ he will easily detect inconsistencies and infer that Bob$_{ k^{ \prime } }^{ i }$ deliberately spreads disinformation.
\end{proof}

So, to cope with the second scenario (\textbf{S}$_{ 2 }$), one may rely on the verification test \textsc{IsBob'sProofConsistent} shown in Figure \ref{fig: The IsBob'sProofConsistent Algorithm}, which can decide whether or not $\mathbf { p }_{ k^{ \prime } }^{ i }$ is consistent with $\overline { c_{ k }^{ i } }$, by checking if it satisfies property \eqref{eq: The Equality of Tuples with Opposite Values in Positions k & k'} of Corollary \ref{crl: The Fundamental Property of Two Consistent Proof Sequences}.
We again note that in a real implementation of the next test we must take into account the possible imperfections of the quantum channel, and the probabilistic outcome of the measurements, which implies that the strict inequality requirement should be relaxed and we should test for approximate inequality $\napprox$. In the pseudocode, we use the following conventions.

\begin{itemize}
	\item	$i$, $1 \leq i \leq m$, is the index of Alice$_{ i }$
	\item	$k$, $1 \leq k \leq n_{ i }$, is the index of Bob$_{ k }^{ i }$
	\item	$k^{ \prime }$, $1 \leq k^{ \prime } \neq k \leq n_{ i }$, is the index of Bob$_{ k^{ \prime } }^{ i }$
	\item	$c_{ k }^{ i }$ is the verification outcome that Alice$_{ i }$ has send to Bob$_{ k }^{ i }$
	\item	$\mathbf { p }_{ k }^{ i }$ is the proof sequence that Alice$_{ i }$ has send to Bob$_{ k }^{ i }$
	\item	$\overline { c_{ k }^{ i } }$ is the verification outcome that Bob$_{ k^{ \prime } }^{ i }$ claims he received from Alice$_{ i }$
	\item	$\mathbf { p }_{ k^{ \prime } }^{ i }$ is the proof sequence that Bob$_{ k^{ \prime } }^{ i }$ claims he received from Alice$_{ i }$
\end{itemize}

\begin{tcolorbox}
	[
		grow to left by = 0.00 cm,
		grow to right by = 0.00 cm,
		colback = WordVeryLightTeal!50,			
		enhanced jigsaw,						
		sharp corners,
		toprule = 0.10 pt,
		bottomrule = 0.10 pt,
		leftrule = 0.1 pt,
		rightrule = 0.1 pt,
		sharp corners,
		center title,
		fonttitle = \bfseries
	]
	\begin{figure}[H]
		\centering
		\begin{minipage}[b]{1.00 \textwidth}
			\begin{algorithm}[H]
				\SetAlgorithmName{Verification Test}{ }{ }
				\caption{\textsc{IsBob'sProofConsistent}$( i, k, k^{ \prime }, c_{ k }^{ i }, \mathbf { p }_{ k }^{ i }, \overline { c_{ k }^{ i } }, \mathbf { p }_{ k^{ \prime } }^{ i } )$ }
				\label{alg: The IsBob'sProofConsistent Algorithm}
				$N
				=
				\lvert
				\
				P_{ \overline { c_{ k }^{ i } } } ( \mathbf { p }_{ k^{ \prime } }^{ i }, k^{ \prime } )
				\
				\rvert
				$
				\\
				\If { $N \neq \frac { d } { 2 }$ }
				{
					\Return FALSE
				}
				\If
				{
					$P_{ c_{ k }^{ i }, \overline { c_{ k }^{ i } } } ( \mathbf { p }_{ k }^{ i }, k, k^{ \prime } )
					\neq
					P_{ c_{ k }^{ i }, \overline { c_{ k }^{ i } } } ( \mathbf { p }_{ k^{ \prime } }^{ i }, k, k^{ \prime } )$
				}
				{
					\Return FALSE
				}
				\Return \textsc{IsProofBalanced}$( i, k^{ \prime }, \overline { c_{ k }^{ i } }, \mathbf { p }_{ k^{ \prime } }^{ i } )$
			\end{algorithm}
			\caption{Bob$_{ k }^{ i }$ uses the above algorithm to check if $\mathbf { p }_{ k^{ \prime } }^{ i }$ is consistent with $\overline { c_{ k }^{ i } }$ that Bob$_{ k^{ \prime } }^{ i }$ claims to have received from Alice$_{ i }$.}
			\label{fig: The IsBob'sProofConsistent Algorithm}
		\end{minipage}
	\end{figure}
\end{tcolorbox}

By combining both verification checks, it is possible to detect an insidious Alice$_{ i }$ who deliberately spreads disinformation and confusion by sending opposite verification outcomes $c_{ k }^{ i }$ and $\overline { c_{ k }^{ i } }$ to Bob$_{ k }^{ i }$ and Bob$_{ k^{ \prime } }^{ i }$ ($k^{ \prime } \neq k$), using the correct proof sequences $\mathbf { p }_{ k }^{ i }$ and $\mathbf { p }_{ k^{ \prime } }^{ i }$ in each case. This is analyzed in the following Theorem \ref{thr: Does Alice Deliberately Spread Disinformation?}.

\begin{theorem} [Malicious news verifier detection] \label{thr: Does Alice Deliberately Spread Disinformation?} \
	Suppose that Bob$_{ k }^{ i }$, $1 \leq k \leq n_{ i }$, has received from Alice$_{ i }$, $1 \leq i \leq m$, the verification outcome $c_{ k }^{ i }$ and the consistent proof sequence $\mathbf { p }_{ k }^{ i }$.

	Bob$_{ k }^{ i }$ infers that Alice$_{ i }$ is a malicious actor that deliberately spreads disinformation, if he also receives a proof sequence $\mathbf { p }_{ k^{ \prime } }^{ i }$ consistent with the opposite outcome $\overline { c_{ k }^{ i } }$ from another news aggregator Bob$_{ k^{ \prime } }^{ i }$ ($k^{ \prime } \neq k$).
\end{theorem}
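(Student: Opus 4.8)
Here is how I would approach the proof of Theorem \ref{thr: Does Alice Deliberately Spread Disinformation?}. The plan is to reduce the situation to a two-branch dichotomy and then close off one branch using an \emph{unforgeability} property: a proof sequence that is genuinely consistent with an outcome can only be produced by the party that actually holds Alice$_{ i }$'s measured string $\mathbf { a }^{ i }$. Throughout, I work inside the threat model given by scenarios (\textbf{S}$_{ 1 }$)--(\textbf{S}$_{ 3 }$), and I take as given that Bob$_{ k }^{ i }$ has already run \textsc{IsAlice'sProofConsistent} on the pair $( c_{ k }^{ i }, \mathbf { p }_{ k }^{ i } )$ received from Alice$_{ i }$ and obtained TRUE; by Proposition \ref{prp: Is Alice's Proof Consistent?} this certifies that $\mathbf { p }_{ k }^{ i }$ is a bona fide consistent proof for $c_{ k }^{ i }$, hence --- by Definition \ref{def: Proof Sequences} --- that it was assembled from the genuine $n_{ i }$-tuples of $\mathbf { a }^{ i }$ whose bit in position $k$ equals $c_{ k }^{ i }$, with cryptic tuples filling the rest.

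First I would set up the dichotomy. When Bob$_{ k }^{ i }$ receives $( \overline { c_{ k }^{ i } }, \mathbf { p }_{ k^{ \prime } }^{ i } )$ from Bob$_{ k^{ \prime } }^{ i }$ and runs \textsc{IsBob'sProofConsistent}, exactly one of two things occurs. Either the test returns FALSE, in which case $\mathbf { p }_{ k^{ \prime } }^{ i }$ fails to be consistent with $\overline { c_{ k }^{ i } }$, Bob$_{ k^{ \prime } }^{ i }$ is the culprit, and we are back in scenario (\textbf{S}$_{ 2 }$), already settled by Theorem \ref{thr: Does Bob Deliberately Spread Disinformation?}. Or the test returns TRUE --- which is precisely the hypothesis of the present theorem --- so $\mathbf { p }_{ k^{ \prime } }^{ i }$ is a genuine consistent proof for $\overline { c_{ k }^{ i } }$ and, by Corollary \ref{crl: The Fundamental Property of Two Consistent Proof Sequences}, it matches $\mathbf { p }_{ k }^{ i }$ on the positions recorded in \eqref{eq: The Equality of Tuples with Opposite Values in Positions k & k'}. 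The remaining work is to show that, in this second branch, the only explanation admissible in the model is a malicious Alice$_{ i }$, i.e. scenario (\textbf{S}$_{ 3 }$).

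The heart of the argument is the unforgeability step. A proof sequence consistent with an outcome $v$ must, by Definition \ref{def: Proof Sequences}, reproduce \emph{verbatim} the full $n_{ i }$-tuples $\mathbf { a }_{ l }$ of $\mathbf { a }^{ i }$ at every tuple index $l$ where $a_{ k^{ \prime }, l }^{ i } = v$. But after the measurements of the news verification phase the full string $\mathbf { a }^{ i }$ is known only to Alice$_{ i }$: by hypothesis (\textbf{H}$_{ 1 }$) and the entanglement distribution scheme of Definition \ref{def: Entanglement Distribution Scheme}, Bob$_{ k^{ \prime } }^{ i }$'s measured string $\mathbf { b }_{ k^{ \prime } }^{ i }$ agrees with $\mathbf { a }^{ i }$ only in position $k^{ \prime }$ of each tuple --- the analogue of \eqref{eq: Alice & Bob's k Equal Bits} for the index $k^{ \prime }$ --- while every other entry of $\mathbf { b }_{ k^{ \prime } }^{ i }$ comes from a $\ket{ + }$ qubit and is statistically independent of $\mathbf { a }^{ i }$. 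Hence Bob$_{ k^{ \prime } }^{ i }$ can do no better than guess the missing entries of each tuple, and the probability that he reconstructs even the relevant $\approx d/2$ tuples correctly is astronomically small --- of the same flavour as, and in fact dominated by, the bound \eqref{eq: Probability Bob k' Deceives Bob k} from the proof of Theorem \ref{thr: Does Bob Deliberately Spread Disinformation?}. Therefore a proof sequence that actually passes \textsc{IsBob'sProofConsistent} cannot have been manufactured by Bob$_{ k^{ \prime } }^{ i }$; it must have been handed to him by Alice$_{ i }$.

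Finally I would assemble the conclusion: Bob$_{ k }^{ i }$ holds a consistent proof $\mathbf { p }_{ k }^{ i }$ for $c_{ k }^{ i }$ that came from Alice$_{ i }$, and he now also holds a consistent proof $\mathbf { p }_{ k^{ \prime } }^{ i }$ for $\overline { c_{ k }^{ i } }$ which, by the unforgeability step, likewise came from Alice$_{ i }$. So Alice$_{ i }$ issued the two contradictory verdicts $c_{ k }^{ i }$ and $\overline { c_{ k }^{ i } }$, each backed by a correctly constructed proof sequence --- exactly scenario (\textbf{S}$_{ 3 }$) --- whereas an honest verifier certifies a given piece of news consistently to every aggregator in her active network. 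Bob$_{ k }^{ i }$ therefore legitimately infers that Alice$_{ i }$ is a malicious actor deliberately spreading disinformation. I expect the main obstacle to be making the unforgeability step watertight: precisely characterising what partial information about $\mathbf { a }^{ i }$ each player can hold under Definition \ref{def: Entanglement Distribution Scheme}, and bounding the success probability of an aggregator who tries to synthesise a consistent proof without possessing $\mathbf { a }^{ i }$ in full --- everything else is a straightforward case split over the output of \textsc{IsBob'sProofConsistent}.
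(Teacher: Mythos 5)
Your proposal is correct and follows essentially the same route as the paper: both arguments invoke Theorem \ref{thr: Does Bob Deliberately Spread Disinformation?} (the negligible forging probability \eqref{eq: Probability Bob k' Deceives Bob k}) to conclude that a proof sequence passing \textsc{IsBob'sProofConsistent} for the opposite outcome $\overline{c_{k}^{i}}$ could only have originated with Alice$_{i}$, who must therefore have issued contradictory verdicts. Your write-up merely makes the case split on the test's output and the ``unforgeability'' reasoning more explicit than the paper's brief argument.
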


\begin{proof}
	The present situation examines how a news aggregator can uncover an insidious news verifier Alice$_{ i }$ who deliberately spreads disinformation and confusion by sending the verification outcome $c_{ k }^{ i }$ to Bob$_{ k }^{ i }$ and, at the same time, sending the opposite verification outcome $\overline { c_{ k }^{ i } }$ to Bob$_{ k^{ \prime } }^{ i }$ ($k^{ \prime } \neq k$), using consistent proof sequences $\mathbf { p }_{ k }^{ i }$ and $\mathbf { p }_{ k^{ \prime } }^{ i }$ in each case.

	According to Theorem \ref{thr: Does Bob Deliberately Spread Disinformation?}, the probability that another news aggregator Bob$_{ k^{ \prime } }^{ i }$ ($k^{ \prime } \neq k$) will manage to construct on his own a proof sequence $\mathbf { p }_{ k^{ \prime } }^{ i }$ consistent is negligible. Hence, if the verification test \textsc{IsBob'sProofConsistent} shown in Figure \ref{fig: The IsBob'sProofConsistent Algorithm} returns TRUE, the logical conclusion is that Alice$_{ i }$ herself must have sent the consistent proof sequence $\mathbf { p }_{ k^{ \prime } }^{ i }$ to Bob$_{ k^{ \prime } }^{ i }$.

	Thus, Alice$_{ i }$ deliberately sends contradictory verification outcomes to create confusion and spread disinformation.
\end{proof}

At this point, considering all the previous analysis, we present the proposed quantum news verification algorithm (QNVA) below. For every piece of news that must be checked, the QNVA is employed by each news aggregator Bob$_{ k }^{ i }$, $1 \leq k \leq n_{ i }$, independently and in parallel with every other news aggregator. In the presentation, we use the following notation.

\begin{itemize}
	\item	$i$, $1 \leq i \leq m$, is the index of Alice$_{ i }$
	\item	$k$, $1 \leq k \leq n_{ i }$, is the index of Bob$_{ k }^{ i }$
	\item	QNVA$( k )$ is the instance of QVNA executed by Bob$_{ k }^{ i }$
	\item	$M_{ A }$ and $M_{ V }$ are the list of malicious news aggregators and news verifiers, respectively, as surmised by Bob$_{ k }^{ i }$. The purpose of the reputation lists is to identity insidious agents and ignore any further communication originating from them.
	\item	$k^{ \prime }$, $1 \leq k^{ \prime } \neq k \leq n_{ i }$, is the index of Bob$_{ k^{ \prime } }^{ i }$
	\item	$c_{ k }^{ i }$ is the verification outcome that Alice$_{ i }$ has send to Bob$_{ k }^{ i }$
	\item	$\mathbf { p }_{ k }^{ i }$ is the proof sequence that Alice$_{ i }$ has send to Bob$_{ k }^{ i }$
	\item	$c_{ k^{ \prime } }^{ i }$ is the verification outcome that Bob$_{ k^{ \prime } }^{ i }$ claims he received from Alice$_{ i }$
	\item	$\mathbf { p }_{ k^{ \prime } }^{ i }$ is the proof sequence that Bob$_{ k^{ \prime } }^{ i }$ claims he received from Alice$_{ i }$
\end{itemize}

\begin{tcolorbox}
	[
		enhanced,
		breakable,
		grow to left by = 0.00 cm,
		grow to right by = 0.00 cm,
		colback = WordVeryLightTeal!50,			
		enhanced jigsaw,						
		sharp corners,
		toprule = 0.01 pt,
		bottomrule = 0.01 pt,
		leftrule = 0.1 pt,
		rightrule = 0.1 pt,
		sharp corners,
		center title,
		fonttitle = \bfseries
	]
	\begin{algorithm}[H]
		\SetAlgorithmName{Algorithm}{ }{ }
		\setcounter{algocf}{0}
		\caption { \textsc { QNVA$( k )$ } }
		\label{alg: The Quantum News Verification Algorithm}
	\end{algorithm}
	\begin{enumerate}  [ left = 0.90 cm, labelsep = 0.50 cm, start = 0 ]
		\renewcommand\labelenumi{(\textbf{Step}$_\theenumi$)}
		\item	\textbf{Initialize} $\triangleright$ $M_{ A } = M_{ V } = \emptyset$
		\item	\textbf{Receive} $\triangleright$ Bob$_{ k }^{ i }$ receives Alice$_{ i }$'s verification outcome $c_{ k }^{ i }$ and proof $\mathbf { p }_{ k }^{ i }$.
		\item	\textbf{Test} $\triangleright$ Bob$_{ k }^{ i }$ calls the verification test \textsc{IsAlice'sProofConsistent} (Figure \ref{fig: The IsAlice'sProofConsistent Algorithm}) to check whether $\mathbf { p }_{ k }^{ i }$ is consistent with $c_{ k }^{ i }$.
		\begin{itemize}
			\item[$\star$]	If the test returns TRUE, then Bob$_{ k }^{ i }$ accepts Alice$_{ i }$'s assessment.
			\item[$\star$]	If the test returns FALSE, then Bob$_{ k }^{ i }$ rejects the news in question as fake, adds Alice$_{ i }$ to his $M_{ V }$ list, and terminates the algorithm.
		\end{itemize}
		\item	\textbf{Send} $\triangleright$ Upon the successful completion of the previous verification check, Bob$_{ k }^{ i }$ sends every other Bob$_{ k^{ \prime } }^{ i }$ ($1 \leq k^{ \prime } \neq k \leq n_{ i }$) not contained in his $M_{ A }$ list, the verification outcome $c_{ k }^{ i }$ and the accompanying proof $\mathbf { p }_{ k }^{ i }$ received from Alice$_{ i }$.
		\item	\textbf{Receive} $\triangleright$ Bob$_{ k }^{ i }$ receives from every other Bob$_{ k^{ \prime } }^{ i }$ ($1 \leq k^{ \prime } \neq k \leq n_{ i }$) not contained in his $M_{ A }$ list, the verification outcome $c_{ k^{ \prime } }^{ i }$ and proof $\mathbf { p }_{ k^{ \prime } }^{ i }$ Bob$_{ k^{ \prime } }^{ i }$ claims he received from Alice$_{ i }$.
		\item	\textbf{Compare} $\triangleright$ Bob$_{ k }^{ i }$ compares his $c_{ k }^{ i }$ to all other $c_{ k^{ \prime } }^{ i }$.
		\begin{itemize}
			\item[$\star$]	If all $c_{ k^{ \prime } }^{ i }$ coincide with $c_{ k }^{ i }$, then Bob$_{ k }^{ i }$ sticks to his preliminary decision, and terminates the algorithm.
			\item[$\star$]	If there is \emph{at least one} $c_{ k^{ \prime } }^{ i }$ such that $c_{ k^{ \prime } }^{ i } = \overline { c_{ k }^{ i } }$, Bob$_{ k }^{ i }$ calls the verification test \textsc{IsBob'sProofConsistent} (Figure \ref{fig: The IsBob'sProofConsistent Algorithm}) to check whether $\mathbf { p }_{ k^{ \prime } }^{ i }$ is consistent with $\overline { c_{ k }^{ i } }$.
			\begin{itemize}
				\item[$\square$]	If the test returns FALSE, then Bob$_{ k }^{ i }$ adds Bob$_{ k^{ \prime } }^{ i }$ to his $M_{ A }$ list, and repeats the same procedure for the next opposite verification outcome, if any.
				\item[$\square$]	If the test returns TRUE, then Bob$_{ k }^{ i }$ rejects the news in question as fake, adds Alice$_{ i }$ to his $M_{ V }$ list, and terminates the algorithm.
			\end{itemize}
		\end{itemize}
	\end{enumerate}
\end{tcolorbox}

In real life, the existence of opposite conflicting verification outcomes increases the odds of confusion and unchecked spread of misinformation. The quantum news verification algorithm, by taking advantage of the phenomenon of entanglement and its unique ramifications, can eliminate the risks in certain critical situations, as those outline in the preceding scenarios (\textbf{S}$_{ 1 }$) -- (\textbf{S}$_{ 3 }$).

\section{Discussion and conclusions} \label{sec: Discussion and Conclusions}

In the era of social media, the proliferation of fake news has emerged as a pressing issue. Particularly in economically developed countries, users tend to encounter more false information than accurate content. The impact of fake news on major social media platforms extends beyond the digital realm, influencing people’s opinions and actions in the real world. Researchers have been driven to seek practical solutions to address this undesirable situation.

This research paper introduces a fresh perspective on the critical topic of news verification. Departing from the conventional Quantum Machine Learning approach, our approach explores an alternative quantum avenue. Drawing inspiration from successful quantum protocols that achieve distributed and detectable Byzantine Agreement in massively distributed environments, we propose the entanglement-based quantum algorithm QNVA.

The QNVA offers several advantages:

\begin{itemize}
	\item	\textbf{Generality:} It can handle any number of news aggregators and verifiers.
	\item	\textbf{Efficiency:} The algorithm completes in a constant number of steps, regardless of the participant count.
	\item	\textbf{Simplicity:} It relies solely on EPR (specifically $\ket{ \Phi^{ + } }$) pairs. EPR pairs are the easiest maximally entangled states to produce, unlike more complex states such as $\ket{ GHZ_{ n } }$, which do not scale well as the number of players increases.
\end{itemize}

The aforementioned attributes underscore its scalability and practical applicability. To reinforce this assertion, we examine in Table \ref{tbl: Numerical Characteristics of the QNVA} how the chosen value of the accuracy degree $d$ influences the likelihood of a malicious aggregator successfully fabricating a consistent proof sequence. Notably, the accuracy degree $d$ remains independent of the number of participants, further enhancing the algorithm's scalability. Naturally, selecting an appropriate value for $d$ is crucial to ensure the negligible probability of a malicious actor successfully forging a consistent proof sequence. The rationale behind $d$ not scaling with the number of aggregators and verifiers lies in the protocol's consistent utilization of EPR pairs, signifying bipartite entanglement. As per protocol guidelines, each consistency check involves a comparison between two bit vectors. Consequently, irrespective of the participant count, each comparison entails only two bit strings. Furthermore, even in the most general scenario, this comparison involves just two bits, denoted as $i$ and $j$, in each tuple. Thus, probabilistically, the situation remains consistent. In essence, the probability of a malicious aggregator deceiving an honest aggregator hinges on the likelihood of selecting the correct configuration from many possibilities. The total number of configurations equals the ways to distribute approximately $\frac{d}{4}$ identical objects (either $0$ or $1$) into approximately $\frac{d}{2}$ distinguishable boxes (representing the uncertain tuples). The probability of a cheater correctly placing \emph{all} the approximately $\frac{d}{4}$ bits within the approximately $\frac{d}{2}$ cryptic tuples is

\begin{align}
	P( \text{ malicious aggregator cheats } )
	\approx
	\frac { 1 }
	{ \binom { \ d / 2 \ } { \ d / 4 \ } }
	\ , \label{eq: Malicious Aggregator Cheats}
\end{align}
which tends to zero as $d$ increases.

\begin{tcolorbox}
	[
	grow to left by = 0.00 cm,
	grow to right by = 0.00 cm,
	colback = WordVeryLightTeal!25,			%
	enhanced jigsaw,						
	sharp corners,
	boxrule = 0.1 pt,
	toprule = 0.1 pt,
	bottomrule = 0.1 pt
	]
	\begin{table}[H]
		\renewcommand{\arraystretch}{1.60}
		\caption{This table shows how the chosen value of the degree of accuracy $d$ affects the probability that a malicious aggregator succeeds in forging a consistent proof sequence.}
		\label{tbl: Numerical Characteristics of the QNVA}
		\centering
		\begin{tabular}
			{
				>{\centering\arraybackslash} m{1.00 cm} !{\vrule width 0.5 pt}
				>{\centering\arraybackslash} m{1.50 cm} !{\vrule width 0.5 pt}
				>{\centering\arraybackslash} m{1.50 cm} !{\vrule width 0.5 pt}
				>{\centering\arraybackslash} m{7.00 cm}
			}
			\Xhline{4\arrayrulewidth}
			\multicolumn{4}{c}
			{ \cellcolor[HTML]{000000} { \color[HTML]{FFFFFF} \textbf{How the degree of accuracy $d$ affects the probability $P$} } }
			\\
			\Xhline{\arrayrulewidth}
			{ \cellcolor[HTML]{000000} { \color[HTML]{FFFFFF} $d$ } }
			&
			$d / 2 $
			&
			$d / 4 $
			&
			$P( \text{ malicious aggregator cheats } )$
			\\
			\Xhline{3\arrayrulewidth}
			{ \cellcolor[HTML]{000000} { \color[HTML]{FFFFFF} $4$ } }
			&
			\xinteval { 4 // 2 }
			&
			\xinteval { 4 // 4 }
			&
			\xintdeffloatfunc	Combinations ( m ) := ( m / 2 )! // ( ( m / 4 )! ( m / 4 )! );
			\xintdeffloatfunc	Probability ( m ) := 1 / Combinations ( m );
			\xintfloateval{ Probability ( 4 ) }
			\\
			\Xhline{\arrayrulewidth}
			{ \cellcolor[HTML]{000000} { \color[HTML]{FFFFFF} $8$ } }
			&
			\xinteval { 8 // 2 }
			&
			\xinteval { 8 // 4 }
			&
			\xintDigits*:=5;
			\xintdeffloatfunc	Combinations ( m ) := ( m / 2 )! // ( ( m / 4 )! ( m / 4 )! );
			\xintdeffloatfunc	Probability ( m ) := 1 / Combinations ( m );
			\xintfloateval{ Probability ( 8 ) }
			\\
			\Xhline{\arrayrulewidth}
			{ \cellcolor[HTML]{000000} { \color[HTML]{FFFFFF} $16$ } }
			&
			\xinteval { 16 // 2 }
			&
			\xinteval { 16 // 4 }
			&
			\xintDigits*:=5;
			\xintdeffloatfunc	Combinations ( m ) := ( m / 2 )! // ( ( m / 4 )! ( m / 4 )! );
			\xintdeffloatfunc	Probability ( m ) := 1 / Combinations ( m );
			\xintfloateval{ Probability ( 16 ) }
			\\
			\Xhline{\arrayrulewidth}
			{ \cellcolor[HTML]{000000} { \color[HTML]{FFFFFF} $32$ } }
			&
			\xinteval { 32 // 2 }
			&
			\xinteval { 32 // 3 }
			&
			\xintDigits*:=5;
			\xintdeffloatfunc	Combinations ( m ) := ( m / 2 )! // ( ( m / 4 )! ( m / 4 )! );
			\xintdeffloatfunc	Probability ( m ) := 1 / Combinations ( m );
			\xintfloateval{ Probability ( 32 ) }
			\\
			\Xhline{\arrayrulewidth}
			{ \cellcolor[HTML]{000000} { \color[HTML]{FFFFFF} $64$ } }
			&
			\xinteval { 64 // 3 }
			&
			\xinteval { 64 // 4 }
			&
			\xintDigits*:=5;
			\xintdeffloatfunc	Combinations ( m ) := ( m / 2 )! // ( ( m / 4 )! ( m / 4 )! );
			\xintdeffloatfunc	Probability ( m ) := 1 / Combinations ( m );
			\xintfloateval{ Probability ( 64 ) }
			\\
			\Xhline{4\arrayrulewidth}
		\end{tabular}
		\renewcommand{\arraystretch}{1.0}
	\end{table}
\end{tcolorbox}
\bibliographystyle{ieeetr}
\bibliography{QNVfNA}

\end{document}